\crefname{section}{\S}{\S\S}
\pgfplotsset{compat=newest}
\pgfplotsset{width=4*\textwidth/5}
\newcommand\refentry[1]{% positions two related legendimages in one cell
  \raisebox{1pt}{\ref{plot:#1ues-sup=off}}\llap{\raisebox{-1pt}{\ref{plot:#1ues-sup=on}}}%
}
\newtheorem{theorem}{Theorem}[section]
\newtheorem{lemma}[theorem]{Lemma}
\newtheorem{example}[theorem]{Example}
\newtheorem{definition}[theorem]{Definition}
\newcommand{\smt}[1]{\texttt{#1}}
\newcommand{\SMT}{SMT\xspace}
\newcommand{\SAT}{SAT\xspace}
\newcommand{\True}{\mathbf{true}}
\newcommand{\tuple}[1]{\left\langle #1 \right\rangle}
\newcommand{\set}[1]{\left\{ #1 \right\}}
\newcommand{\parens}[1]{\left( #1 \right)}
\newcommand{\alt}{\mathrel{|}}
\newcommand{\imp}{\Rightarrow}
\newcommand{\deq}{\triangleq}
\newcommand{\eventpo}{\ensuremath{\ll}\xspace}
\newcommand{\clockpo}{\ensuremath{\prec}\xspace}
\newcommand{\clockpoeq}{\ensuremath{\preceq}\xspace}
\newcommand{\events}{\ensuremath{\mathsf{E}}\xspace}
\newcommand{\reads}{\ensuremath{\mathsf{R}}\xspace}
\newcommand{\writes}{\ensuremath{\mathsf{W}}\xspace}
\newcommand{\clock}[1]{\ensuremath{\mathsf{c}_{#1}}\xspace}
\newcommand{\supremum}[1]{\ensuremath{\mathsf{sup}_{#1}}\xspace}
\newcommand{\readSel}[1]{\ensuremath{\mathsf{s}_{#1}}\xspace}
\newcommand{\writeSel}[1]{\ensuremath{\mathsf{s}_{#1}}\xspace}
\newcommand{\theoryClock}{\ensuremath{\mathcal{T}_{C}}\xspace}
\newcommand{\theoryValue}{\ensuremath{\mathcal{T}_{V}}\xspace}
\newcommand{\theorySelection}{\ensuremath{\mathcal{T}_{S}}\xspace}
\newcommand{\theoryInts}{\ensuremath{\mathcal{T}_{\mathbb{Z}}}\xspace}
\newcommand{\theoryReals}{\ensuremath{\mathcal{T}_{\mathbb{R}}}\xspace}
\newcommand{\theoryBV}{\ensuremath{\mathcal{T}_{\mathbb{BV}}}\xspace}
\newcommand{\valf}{\ensuremath{\textstyle val}\xspace}
\newcommand{\guardf}{\ensuremath{\textstyle guard}\xspace}
\newcommand{\readval}[1]{\ensuremath{\mathsf{rv}_{#1}}\xspace}
\newcommand{\writeval}[1]{\ensuremath{\valf(#1)}\xspace}
\newcommand{\guard}[1]{\ensuremath{\guardf(#1)}\xspace}
\newcommand{\quadEnc}{\ensuremath{\mathcal{E}^2}\xspace}
\newcommand{\quadEncN}{\ensuremath{\mathcal{E}^2}}
\newcommand{\cubicEnc}{\ensuremath{\mathcal{E}^3}\xspace}
\newcommand{\writeTO}{\ensuremath{\mathbf{WW}}\xspace}
\newcommand{\readWriteTO}{\ensuremath{\mathbf{RW}}\xspace}
\newcommand{\readFromTO}{\ensuremath{\mathbf{RF_{TO}}}\xspace}
\newcommand{\ppo}{\ensuremath{\mathbf{PPO}}\xspace}
\newcommand{\valueConstraintsCubi}{\ensuremath{\mathbf{RF}^3}\xspace}
\newcommand{\valueConstraintsCubii}{\ensuremath{\mathbf{FR}}\xspace}
\newcommand{\valueConstraintsQuadi}{\ensuremath{\mathbf{RF}^2}\xspace}
\newcommand{\valueConstraintsQuadii}{\ensuremath{\mathbf{SUP}}\xspace}
\newcommand{\memorylocation}{\langle\textit{ADDRESS}\rangle}
\newcommand{\Th}{\ensuremath{\mathcal{T}}\xspace}
\newcommand{\DPLLT}{DPLL(\Th)\xspace}
\newcommand{\atoms}{\ensuremath{\mathcal{A}}\xspace}
\newcommand{\TLiterals}{\ensuremath{\mathcal{L}_\atoms}\xspace}
\newcommand{\prop}[1]{\ensuremath{{#1}^{\mathbb{B}}}\xspace}
\newcommand{\TLemma}{\ensuremath{L}\xspace}
\newcommand{\TConf}{\ensuremath{\lnot\TLemma}\xspace}
\newcommand{\PLemma}[1]{\ensuremath{\prop{\TLemma}_{#1}}\xspace}
\newcommand{\PConf}[1]{\ensuremath{\lnot\PLemma{#1}}\xspace}
\newcommand{\Res}{\ensuremath{\textsc{Res}}\xspace}
\newcommand{\res}{\otimes}
\newcommand{\ThLearn}{\ensuremath{\Th\textsc{-learn}}\xspace}
\newcommand{\ThModels}{\models_{\Th}}
\newcommand{\Model}{\ensuremath{\mu}\xspace}
\newcommand{\VarAssign}{\ensuremath{\nu}\xspace}
\newcommand{\PreAbs}{\ensuremath{H}\xspace}
\newcommand{\assign}{\ensuremath{M}\xspace}
\newcommand{\Crits}{\ensuremath{Q}\xspace}
\newcommand{\bigand}{\bigwedge}
\newcommand{\bigor}{\bigvee}
\newcommand{\LInput}{\ensuremath{\alpha}}
\newcommand{\Last}{\ensuremath{\beta}}
\newcommand{\Proof}{\ensuremath{\Pi}\xspace}
\newcommand{\Map}{\ensuremath{\mathbf{F}}\xspace}
\newcommand{\MaxMap}{\ensuremath{\mathbf{F^*}}\xspace}
\newcommand{\PVars}{\ensuremath{\mathcal{X}}\xspace}
\newcommand{\emptyclause}{\Box}
\newcommand{\Lit}{\ell}
\newcommand{\prule}[3]{\infer[\ #1]{#3}{#2}}
\newcommand{\axiswidth}{17em}
\title{A Concurrency Problem with \\ Exponential DPLL(T) Proofs \\
  \small{Extended Version}}
\titlerunning{A Concurrency Problem with Exponential DPLL(T) Proofs}
\author{
  Liana Hadarean\inst{1}
  \and
  Alex Horn\inst{1}
  \and
  Tim King\inst{2}
}
\authorrunning{Hadarean, Horn, and King}
\institute{
  University of Oxford \\
  \email{liana.hadarean@cs.ox.ac.uk, alex.horn@cs.ox.ac.uk}
\and
  Verimag \\
  \email{tim.king@imag.fr}
}
\begin{document}

\maketitle

% Tim: I have found it useful to keep the abstract in a form that it can be copy pasted into text boxes in online forms. So try to not have line breaks or any special math symbols.
\begin{abstract}
Many satisfiability modulo theories solvers implement a variant of the \DPLLT framework which separates theory-specific reasoning from reasoning on the propositional abstraction of the formula. Such solvers conclude that a formula is unsatisfiable once they have learned enough theory conflicts to derive a propositional contradiction. However some problems, such as the diamonds problem, require learning exponentially many conflicts. We give a general criterion for establishing lower bounds on the number of theory conflicts in any \DPLLT proof for a given problem. We apply our criterion to two different state-of-the-art symbolic partial-order encodings of a simple, yet representative concurrency problem. Even though one of the encodings is asymptotically smaller than the other, we establish the same exponential lower bound proof complexity for both. Our experiments confirm this theoretical lower bound across multiple solvers and theory combinations.
\end{abstract}

\section{Introduction}
\label{section:intro}

Many high-level verification tools rely on satisfiability modulo theories (\SMT) solvers to discharge verification conditions in a variety of first-order logic theory theories. 
State-of-the-art \SMT solvers decide such problems by implementing variations on the \DPLLT framework. 
The \DPLLT framework integrates a theory-specific solver with efficient search over the propositional abstraction of the formula.
For this, \DPLLT uses a propositional (\SAT) solver that searches for a satisfying assignment to the propositional abstraction of the formula.
When such an assignment is found, a theory solver checks that this propositional assignment is theory consistent.
If it is not, a theory conflict (or $\Th$-conflict) clause is added, summarizing the inconsistency and preventing the \SAT solver from exploring this part of the search space again.
The process continues until either a theory consistent satisfying assignment is found, or a contradiction can be derived purely on the propositional level using the learned theory conflicts.
% If the input formula is unsatisifiable, the solver can derive a contradiction purely on the propositional level by taking into account 
While usually efficient in practice, there are well-known problems, such as the ``diamonds problem''~\cite{SSB2002}, on which the \DPLLT framework cannot derive a contradiction using a polynomial number of theory conflicts.
This issue has resurfaced in recent work on worst-case execution time~\cite{HAMM14}.
This limitation stems from the fixed alphabet of the \DPLLT theory conflicts.
Despite work on addressing this inherent inefficiency, it is still an open problem~\cite{BDdM08,TR2012}. 

In this paper, we prove a general theorem for establishing lower bounds on the number of $\Th$-conflicts in the \DPLLT calculus~\cite{NOT2006} required to prove that a given formula is unsatisfiable.
The theorem relies on the notion of \emph{non-interfering critical assignments}: propositionally satisfying assignments that contain disjoint $\Th$-conflicts. 
To the best of our knowledge, this is the first attempt at establishing a general framework for establishing lower bounds for \DPLLT proofs.
%
%The theorem provides a general rule for establishing lower bounds for \DPLLT proofs.
%To the best of our knowledge, this is the first proof of a well known ``folk-theorem'' in the SMT community.

We apply this theorem to study the \DPLLT proof complexity of proving a safety property of a simple, yet challenging concurrency problem.
The problem appears in the software verification competition (SV-COMP) and is of broad historical interest~\cite{OG1976,FKP2013}.
% LSH: we say this in section:problem-challenge
%This problem goes back to 1976 when it was used to illustrate the need for auxiliary variables in compositional proof rules~\cite{OG1976}, and most recently it has resurfaced as a challenge problem for automated verification tools~\cite{FKP2013}.
We focus on encodings recently implemented in a bounded model checker~\cite{AKT2013} because they have been successfully used to find concurrency-related bugs in software such as the Apache HTTP server, PostgreSQL and the Linux kernel~\cite{AKT2013}. Informally, these encodings symbolically model a certain partial-ordering between memory accesses, similar to the \emph{happens-before relations} in distributed systems~\cite{L1978}.

\vspace{-1em}
\paragraph{Contributions.} The main contributions of this paper are as follows: (1)~we give a new result for establishing lower bounds on the size of \DPLLT proofs of unsatisfiability; (2)~we propose a new problem challenge for the SMT community, whose solution is directly relevant to finding concurrency-related bugs in software; (3)~we establish a factorial lower bound on the size of \DPLLT proofs of unsatisfiability for this challenge problem; finally, (4)~we experimentally confirm the hardness of this problem.

\vspace{-1em}
\paragraph{Organization.} We prove the lower bound theorem in \autoref{section:noninterfering}. We introduce the problem challenge and explain how to generate two equisatisfiable partial-order encodings in \autoref{section:problem-challenge}. Given these encodings, we formalize the \DPLLT proof size complexity of the challenge problem (\autoref{section:lowerbounds}) and experimentally confirm its complexity (\autoref{section:experiments}).
We conclude with a discussion of related work and future research directions in \autoref{section:concl}.

%%  LocalWords:  satisfiability propositionally unsatisfiable

\section{Non-interfering Critical Assignments}
\label{section:noninterfering}

In this section, we give a general theorem for establishing lower bounds on the number of $\Th$-conflicts in all proofs that a formula $\phi$ is unsatisfiable in the \DPLLT calculus~\cite{NOT2006}.
The theorem is based on the notion of sets of \emph{non-interfering critical assignments} for $\phi$.

We assume readers are familiar with standard notions from SMT such as
$\Th$-conflicts, $\Th$-validity, $\Th$-lemmas, \DPLLT, etc.
In \DPLLT, a proof of unsatisfiability for a $\Th$-formula consists of a combination of learning $\Th$-valid lemmas and performing resolution steps on the propositional abstraction, until the empty clause is derived.
As in~\cite{NOT2006}, we restrict the proofs to work over the fixed alphabet $\atoms$ of $\Th$-atoms
in the input formula and that all $\Th$-lemmas are clauses.
We use a simplified view of the \DPLLT calculus~\cite{NOT2006} that only uses two rules: (i) propositional resolution (\Res) and (ii) learning $\Th$-valid clauses over the literals of $\atoms$ (\ThLearn).
We ignore $\Th$-propagation and splitting-on-demand~\cite{BNOT06}.

\vspace{-1em}
\paragraph{Notation.} 
We fix a set of propositional variables $\PVars$ and use $\Lit$ to denote literals over this set. 
A clause $C$ is a set of literals interpreted as their disjunction.
The empty clause $\emptyclause$ denotes false.
A \emph{partial assignment} $\assign$ is a set of literals that does not contain both a variable and its negation.
Partial assignments are interpreted as a conjunction
$\bigand_{\Lit \in \assign} \Lit$ and are always propositionally consistent.
An \emph{assignment} $\assign$ is a partial assignment s.t. for all $v \in \PVars$ either
$v \in \assign$ or $\lnot v \in \assign$.
The negation of a clause is a set of literals $\lnot C = \{\lnot \Lit \alt \Lit \in C\}$ and is interpreted as a conjunction.

The propositional abstraction function $\prop{\_}$ is an injective map from $\atoms$ into $\PVars$.
The $\Th$-literals, written $\TLiterals$, are the set of literals over $\atoms$.
We lift $\prop{\_}$ to work over $\Th$-literals and sets of $\Th$-literals.
We denote by $\TLemma$ a $\Th$-valid clause over $\TLiterals$, $\ThModels \bigor_{t \in L} t$,
and $\TConf$ will denote a $\Th$-conflict.
A $\Th$-conflict is a set of $\Th$-literals whose conjunction is $\Th$-unsatisfiable, $\TConf \ThModels \emptyclause$.
A \emph{minimal} $\Th$-conflict has the additional property that every strict subset is $\Th$-satisfiable.

\vspace{-1em}
\paragraph{Proofs.}
We assume the input $\Th$-formula $\phi$ has already been converted to CNF and is represented as a finite set of clauses
$C_1, \ldots C_\LInput$ over the variables in $\PVars$, the set of $\Th$-atoms $\atoms$, and the boolean abstraction function $\prop{\_} : \atoms \to \PVars$.
A Fixed-Alphabet-\DPLLT proof has the form:
\[
C_1, \ldots, C_\LInput, \ldots,  C_k, \ldots, C_{\Last} = \emptyclause
\]
where each $C_k$ for $\LInput < k \leq \Last$ is derived from a previous clause using either the resolution rule (\Res) or theory learning (\ThLearn).
% \footnote{
%   We are following the notation in~\cite{AKS10}.
% }
Let $C_i \res_{\Lit} C_j$ denote propositional resolution on $\Lit$. 
\[
\begin{array}{ll}
  \prule{\ThLearn}{ C_1, \ldots, C_k & \TLemma \subseteq \TLiterals & \ThModels \bigor_{t \in \TLemma } t }{C_1, \ldots, C_k, \prop{\TLemma} }  &
  \prule{\Res}{C_1, \ldots, C_k & 1 \le i < j \le k & \Lit \in C_i & \lnot \Lit \in C_j}{ C_1, \ldots, C_k, C_i \res_{\Lit} C_j} 
\end{array}
\]
The rule \ThLearn adds a new clause $\prop{\TLemma}$ that corresponds to the propositional
abstraction of a $\Th$-valid clause.
Clauses derived by \ThLearn are called $\Th$-lemmas.
\ThLearn is more general than Lazy Theory Learning~\cite{NOT2006},
which requires the literals to be in the partial assignment.

\vspace{-1em}
\paragraph{Critical Assignments.}
Given a \Th-formula $\phi$, an assignment $\assign$ is \emph{critical} if it satisfies the initial propositional abstraction of $\phi$ (i.e., $\assign \models \bigand_{i=1}^{\LInput} C_i$) and there is exactly one minimal $\Th$-conflict $\lnot \TLemma$ such that $\lnot \prop{\TLemma} \subseteq \assign$.
We denote by $\Crits$ a set of critical assignments for $\phi$, all of which can be enumerated as $\assign_1, \ldots, \assign_{|\Crits|}$ and where $\lnot\TLemma_i$ denotes the minimal $\Th$-conflict for $\assign_i$.
We say that $\Crits$ is \emph{non-interfering} whenever, for all $\assign_i \neq \assign_j$ in $\Crits$, $\PConf{i}$ is not a subset of $\assign_j$. In other words, no two assignments in $\Crits$ contain the same \Th-conflict. 

\begin{lemma}
  \label{lem:critical}
  Let $\assign$ be a critical assignment for $\phi$
  with the minimal $\Th$-conflict $\TConf$,
  and $\Proof$ be a Fixed-Alphabet-\DPLLT proof that $\phi$ is unsatisfiable.
  There is a \ThLearn application $C_k \in \Proof$ such that
  $\PConf{} \subseteq \lnot C_k \subseteq \assign$.
\end{lemma}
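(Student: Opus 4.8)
The plan is to follow the critical assignment $\assign$ as it flows through the proof $\Proof = C_1, \ldots, C_\Last$ and to isolate the \emph{first} clause that $\assign$ falsifies. Since $\assign$ is a total assignment, for every clause $C$ in the sequence it either satisfies $C$ or falsifies it, and it falsifies $C$ precisely when $\lnot C \subseteq \assign$. By the definition of criticality, $\assign$ satisfies all the input clauses $C_1, \ldots, C_\LInput$, while it certainly falsifies the terminal clause $C_\Last = \emptyclause$. Hence there is a well-defined least index $k$ at which $\assign$ falsifies $C_k$, and necessarily $k > \LInput$, so $C_k$ is a derived clause. This least-index choice already delivers the right-hand inclusion $\lnot C_k \subseteq \assign$.

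Next I would argue that $C_k$ is produced by \ThLearn and not by \Res. The point is that resolution is sound with respect to the fixed total assignment $\assign$: if $\assign$ satisfies both premises $C_i$ and $C_j$, then it satisfies the resolvent $C_i \res_\Lit C_j$, by the routine case split on whether $\Lit \in \assign$. Because $k$ is the least falsified index, $\assign$ satisfies every clause appearing before $C_k$, so $C_k$ cannot be a resolvent. It must therefore be a $\Th$-lemma, $C_k = \prop{\TLemma'}$ for some $\Th$-valid clause $\TLemma'$ with $\ThModels \bigor_{t \in \TLemma'} t$.

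Finally I would establish the left-hand inclusion $\PConf{} \subseteq \lnot C_k$ using minimality together with the non-repetition (uniqueness) clause in the definition of critical. Since $\TLemma'$ is $\Th$-valid, its negation $\lnot \TLemma'$ is a $\Th$-conflict, and from $\lnot C_k \subseteq \assign$ its abstraction lies inside $\assign$. Every $\Th$-conflict contains a minimal one, obtained by deleting $\Th$-literals while preserving $\Th$-unsatisfiability; let $m \subseteq \lnot \TLemma'$ be such a minimal $\Th$-conflict. As the abstraction commutes with negation, $\prop{m} \subseteq \prop{\lnot \TLemma'} = \lnot \prop{\TLemma'} = \lnot C_k \subseteq \assign$, so $m$ is a minimal $\Th$-conflict whose abstraction is contained in $\assign$. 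By criticality there is exactly one such minimal $\Th$-conflict, namely $\TConf$, which forces $m = \TConf$ and hence $\PConf{} \subseteq \lnot C_k$, completing the chain $\PConf{} \subseteq \lnot C_k \subseteq \assign$.

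I expect the main obstacle to be this last step: combining minimality with the uniqueness guarantee of the definition of critical. One must be careful that the learned lemma $\TLemma'$ need not itself be minimal, so the conflict it embeds is identified with $\TConf$ only through the \emph{minimal} sub-conflict $m$, and that the abstraction's commuting with negation is what transfers a containment at the level of $\Th$-literals to the required containment $\PConf{} \subseteq \lnot C_k$ at the propositional level.
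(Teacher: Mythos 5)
Your proof is correct and follows essentially the same route as the paper's: locate the first clause of $\Proof$ falsified by $\assign$, rule out input clauses and resolvents (since resolution preserves satisfaction under a total assignment), and then invoke the uniqueness of the minimal $\Th$-conflict inside $\assign$ to conclude $\PConf{} \subseteq \lnot C_k \subseteq \assign$. Your explicit extraction of a minimal sub-conflict $m \subseteq \lnot \TLemma'$ merely spells out the step the paper states tersely ("$\TLemma \subseteq T$"), so there is no substantive difference.
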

\begin{proof}
  The assignment $\assign$ does not satisfy the last clause $C_\Last = \emptyclause$ in $\Proof$.
  Therefore, there is some first clause $C_k$ that $\assign$ does not satisfy in $\Proof$.
  The clause $C_k$ cannot be an input clause as $\assign \models C_i$ for $1 \leq i \leq \LInput$.
  Additionally,
  $C_k$ cannot be the result of \Res: since $C_k$ is the first unsatisfied clause, 
  all $\assign \models C_i$ for $i < k$, and resolving  $C_i$ and $C_{i'}$
  for $i  \neq i' < k$ results in a clause satisfied by $\assign$.
  Thus $C_k$ must be the result of a \ThLearn application and $\assign \not\models C_k$.
%
%  We next show that any such $C_k$ must contain the conflict in the critical assignment $\assign$ not mapped by \MaxMap: $\PLemma{} \subseteq C_k$.
  Since $\assign$ is an assignment which does not satisfy $C_k$, $\assign$ must contain the negation of all literals in $C_k$. Equivalently, $\lnot C_k \subseteq \assign$.
  Let $T$ be the \Th-lemma corresponding to $C_k$: $C_k = \prop{T}$.
  As $\PConf{}$ is the unique minimal subset of $\assign$ that maps to a
  minimal theory conflict, $\TLemma \subseteq T$. Therefore, $\PConf{} \subseteq \lnot C_k \subseteq \assign$.
\end{proof}

Intuitively Lemma~\ref{lem:critical} states that, for each critical assignment $\assign$, the proof of unsatisfiability must contain a clause, derived by \ThLearn, which rules out \assign as a model of $\phi$ in the theory \Th.

\begin{theorem}
  \label{lem:noninterference}
  Let $\phi$ be an unsatisfiable $\Th$-formula, and
  let $\Crits$ be a non-interfering set of critical assignments for $\phi$.
  Then all Fixed-Alphabet-$\DPLLT$ proofs that $\phi$ is unsatisfiable
  contain at least $|\Crits|$ applications of \ThLearn.
\end{theorem}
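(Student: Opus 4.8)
The plan is to fix an arbitrary Fixed-Alphabet-\DPLLT proof $\Proof$ of unsatisfiability and construct an injection from the non-interfering set $\Crits$ into the set of \ThLearn applications occurring in $\Proof$. Since an injection into a set forces that set to have at least $|\Crits|$ elements, and since $\Proof$ is arbitrary, the bound follows at once. The whole argument is therefore a counting step built directly on top of Lemma~\ref{lem:critical}, with non-interference supplying exactly the hypothesis needed to make the injection work.

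First I would apply Lemma~\ref{lem:critical} once per critical assignment. For each $\assign_i \in \Crits$, with its uniquely determined minimal $\Th$-conflict $\lnot\TLemma_i$, the lemma yields a \ThLearn application $C_{k_i} \in \Proof$ with $\PConf{i} \subseteq \lnot C_{k_i} \subseteq \assign_i$. Fixing one such witness for every $i$ defines a map $f$ from $\Crits$ into the \ThLearn applications of $\Proof$, sending $\assign_i$ to the clause $C_{k_i}$.

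The crux is injectivity of $f$, and this is the single place where non-interference is used. Suppose toward a contradiction that $f(\assign_i) = f(\assign_j) = C_k$ for some $\assign_i \neq \assign_j$ in $\Crits$. Reading the two instances of Lemma~\ref{lem:critical} against the same clause $C_k$ gives $\PConf{i} \subseteq \lnot C_k$ from the application at $\assign_i$ and $\lnot C_k \subseteq \assign_j$ from the application at $\assign_j$; chaining these inclusions yields $\PConf{i} \subseteq \assign_j$. But the definition of a non-interfering set states precisely that $\PConf{i} \not\subseteq \assign_j$ whenever $i \neq j$, a contradiction. Hence distinct critical assignments are sent to distinct \ThLearn applications, so $f$ is injective.

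I do not expect a deep obstacle here, since the argument is essentially pigeonhole layered on Lemma~\ref{lem:critical}; the only points demanding care are bookkeeping ones. I must ensure $\PConf{i}$ is well-defined, which rests on criticality giving a \emph{unique} minimal $\Th$-conflict per assignment, and I must make sure the two inclusions being chained genuinely refer to the one shared clause $C_k$ rather than to two separately chosen witnesses. With injectivity established, the image $f(\Crits)$ is a set of $|\Crits|$ distinct \ThLearn applications contained in $\Proof$, so $\Proof$ uses at least $|\Crits|$ applications of \ThLearn; as $\Proof$ was an arbitrary Fixed-Alphabet-\DPLLT refutation, this proves the theorem.
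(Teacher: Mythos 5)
Your proof is correct, and it rests on exactly the same two ingredients as the paper's: Lemma~\ref{lem:critical} to produce, for each critical assignment, a \ThLearn clause sandwiched as $\PConf{i} \subseteq \lnot C_k \subseteq \assign_i$, and non-interference to rule out one clause serving two assignments via the chained inclusion $\PConf{i} \subseteq \lnot C_k \subseteq \assign_j$. The difference is in how the counting is packaged. The paper works in the dual direction: it considers partial maps $\Map$ from clause indices onto critical assignments (where $\Map(k)=j$ requires $\PLemma{j} \subseteq C_k$), fixes a maximal such map \MaxMap, and derives a contradiction from non-surjectivity --- which forces a case analysis hidden in the phrase ``$\MaxMap(k)=m$ for some $m \neq j$'' (if $k$ were outside the domain, \MaxMap could be extended, contradicting maximality). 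You instead fix one witness clause per assignment and prove the resulting map $\Crits \to \{\text{\ThLearn clauses}\}$ is injective; this avoids the maximality machinery and the implicit case split entirely, and the contradiction you reach is literally the same inclusion chain the paper uses. Your version is the cleaner write-up of the same argument; the paper's formulation buys only the explicit statement that every critical assignment is ``hit'' by some learned lemma, which is the intuition it records after Lemma~\ref{lem:critical}. Your two points of care --- well-definedness of $\PConf{i}$ from uniqueness of the minimal conflict, and making sure both inclusions refer to the single shared clause --- are precisely the right ones.
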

\begin{proof}
  Let $\Proof$ be any Fixed-Alphabet-$\DPLLT$ proof.
  We will show that there exists a surjective partial map from \Th-lemmas in \Proof onto critical assignments in $\Crits$ that contain the same \Th-inconsistency. 
  We examine the set of partial maps \Map over $(\LInput,\Last]$ indices such that
  $\Map(k) = j$ only if $\PLemma{j} \subseteq C_k$ and $C_k$ is a \ThLearn application.
  Let the partial function \MaxMap be a partial function that maps
  onto the maximal number of distinct $\assign \in \Crits$ among all such maps \Map.
  If \MaxMap maps onto all elements in $\Crits$, there are at least $|\Crits|$
  applications \ThLearn in $\Proof$.
  If $|\Crits| = 0$, the property trivially holds on $\Proof$.

  For the remainder of this proof, assume that $|\Crits| \geq 1$.
  Suppose for contradiction that \MaxMap is not surjective.
  We can then select some critical assignment $\assign_j$ such that for all $k \in (\LInput,\Last]$
  either $k$ is not in the domain of $\MaxMap$ or $\MaxMap(k) \neq j$.

  % We first show that if this is the case, there must be some $C_k$ in $\Proof$ added by the
  % \ThLearn rule such that $\assign_j \not\models C_k$.
  % The assignment $\assign_j$ does not satisfy the last clause $C_\Last = \emptyclause$ in $\Proof$.
  % There thus is some first clause $C_k$ that $\assign_j$ does not satisfy in $\Proof$.
  % The clause $C_k$ cannot be an input clause as $\assign_j \models C_i$ for $1 \leq i \leq \LInput$ ($\assign_j$ is critical).
  % Additionally,
  % $C_k$ cannot be the result of \Res: since $C_k$ is the first unsatisfied clause, 
  % all $\assign_j \models C_i$ for $i < k$, and resolving  $C_i$ and $C_{i'}$
  % for $i,i' < k$ results in a clause satisfied by $\assign_j$.
  % Thus $C_k$ must be the result of a \ThLearn application and $\assign_j \not\models C_k$.

  By Lemma~\ref{lem:critical},
  there exists a \ThLearn application $C_k \in \Proof$ such that
  $\PConf{j} \subseteq \lnot C_k \subseteq \assign_j$.
  As $\PLemma{j} \subseteq C_k$, we know that it is possible for
  \MaxMap to map $C_k$ to some $\assign_m \in \Crits$.
  As \MaxMap is maximal and there is no conflict mapped to $\assign_j$,
  $\MaxMap(k) = m$ for some $m \neq j$.
  By the construction of \MaxMap, $\PLemma{m} \subseteq C_k$.
  Recall that $\lnot C_k \subseteq \assign_j$.
  Thus $\PConf{m} \subseteq \lnot C_k \subseteq \assign_j$.
  As $\assign_j$ contains both $\PConf{j}$ and $\PConf{m}$ for some distinct $\assign_m$ in $\Crits$,
  this contradicts the assumption that $\Crits$ is non-interfering.

  We can now conclude by contradiction that \MaxMap maps some
  clause that is the result of \ThLearn in $\Proof$ onto each
  $\assign \in \Crits$.
  Therefore $\Proof$ contains at least $|\Crits|$ applications of \ThLearn.
\end{proof}

There are many instances in the literature of \emph{diamond} benchmarks for which exponential
lower bounds on the number of $\Th$-conflicts have been given~\cite{SSB2002,BDdM08,MKS2009,AM2013,HAMM14}.
Theorem~\ref{lem:noninterference} can be seen as a generalization of the lower bound arguments for the diamond benchmarks.
The rest of this paper is devoted to a novel application of Theorem~\ref{lem:noninterference}.

%%  LocalWords:  disjunction unsatisfiable unsatisfiability

%%% Local Variables:
%%% mode: latex
%%% TeX-master: "paper"
%%% End:

\section{Challenge problem}
\label{section:problem-challenge}

%% \begin{SCfigure}[100][t]
%% \caption{A concurrent program $\texttt{T}_1 \parallel \ldots \parallel \texttt{T}_N$ where $0 < N$ is a fixed integer. We want to check the safety property $[x] \leq N$ if $[x]$ is initially $0$.
%% }
%% \label{fig:fkp2013}
%% \end{SCfigure}
In this section we present a challenge problem based on the \texttt{fpk2013} SV-COMP concurrency benchmark~\cite{FKP2013SVCOMP}.
This problem was first introduced in 1976 to illustrate the need for auxiliary variables in compositional proof rules for concurrent programs~\cite{OG1976}, and most recently it has resurfaced as a challenge problem for automated verification tools~\cite{FKP2013}.
Consider the following simple shared memory program with $N+1$ threads and a shared memory location $x$:
\vspace{-0.5em}
\begin{center}
\begin{tabular}{@{}l@{\hspace{2mm}} || l || c || @{\hspace{2mm}}l@{}}
%\begin{tabular}{@{}l@{\hspace{2mm}} || c || @{\hspace{2mm}}l@{}}
 \multicolumn{1}{c}{Thread $\texttt{T}_0$} & \multicolumn{1}{c}{Thread $\texttt{T}_1$} & \multicolumn{1}{c}{}      & \multicolumn{1}{c}{Thread $\texttt{T}_N$} \\
\midrule
 $\mathbf{local}\ v_0\ \texttt{:=}\ [x]$   & $\mathbf{local}\ v_1\ \texttt{:=}\ [x]$ & \multirow{2}{*}{$\ldots$} & $\mathbf{local}\ v_N\ \texttt{:=}\ [x]$   \\
 $\mathbf{assert}(v_0 \leq N)$                & $[x]\ \texttt{:=}\ v_1 + 1$             &                           & $[x]\ \texttt{:=}\ v_N + 1$
\end{tabular}
\end{center}
The memory at location $x$ is denoted by $[x]$. We assume that $[x]$ is initially $0$. Each thread $\texttt{T}_i$ reads the value at memory location $x$ into a CPU-local register $v_i$.
For $i \geq 1$, thread $\texttt{T}_i$ overwrites the memory at location $x$ with the new value $v_i + 1$.
For the rest of the paper, we denote the read of memory location $x$ in $\texttt{T}_0$ by $r_\mathit{assert}$.
The reads and writes on memory location $x$ in thread $\texttt{T}_i$ for $i \geq 1$ are denoted by $r_i$ and $w_i$, respectively.
%$r_k \deq \mathbf{local}\ v_k\ \texttt{:=}\ [x]$ and $w_k \deq [x]\ \texttt{:=}\ v_k + 1$ for $1 \leq k \leq 2$ denote the 
We follow the SV-COMP convention and assume sequential consistency~\cite{L1979}.
Therefore, if we just consider the concurrent program $\texttt{T}_1 \parallel \texttt{T}_2$,
we get the following six interleavings of shared memory accesses: (1)~$r_1; w_1; r_2; w_2$, (2)~$r_1; r_2; w_1; w_2$, (3)~$r_1; r_2; w_2; w_1$, (4)~$r_2; r_1; w_1; w_2$, (5)~$r_2; r_1; w_2; w_1$, (6)~$r_2; w_2; r_1; w_1$.
The different orders can result in different final values of $[x]$.
For example, $r_1; w_1; r_2; w_2$ results in the final value $2$ at memory location $x$, whereas $r_1; r_2; w_1; w_2$ results in the final value $[x] = 1$.

We want to check that the assertion $v_0 \leq N$ in thread $\texttt{T}_0$ cannot be violated.
%Equivalently, we want to check that $[x] \leq N$ is an invariant in the concurrent program $\texttt{T}_1 \parallel \ldots \parallel \texttt{T}_N$.
Intuitively, this assertion holds because each of the other $N$ threads increments $[x]$ at most once.
For a fixed $N$, we want to prove this automatically using bounded model checking.
While it is easy to automatically prove this property on each separate interleaving, the number of interleavings grows exponentially ($(2N + 1)! \div 2^N$). Next, %In~\autoref{section:partial-order-encoding},
we explain how to generate symbolic partial-order encodings that formalize all interleavings as a single quantifier-free SMT query.

%As artificial as this problem initially may appear to be, it has been of historical interest in concurrency verification, guiding research in the field.

%\section{Partial-order encodings}
%\label{section:partial-order-encoding}

\begin{figure}
\begin{align*}
%\bottom &\deq \bigand \set{ \clock{\bot} \clockpo \clock{e} \alt {e \in \events} } \text{ where } \forall e \in \events \colon \clock{\bot} \not= \clock{e} \\
\ppo &\deq \bigand \set{ (\guard{e} \land \guard{e'}) \Rightarrow (\clock{e} \clockpo \clock{e'}) \alt {e,e'\in \events \colon e\eventpo e'} } \\
\writeTO[x] &\deq \bigand \set{
  \parens{\clock{w} \clockpo \clock{w'} \lor \clock{w'} \clockpo \clock{w}} \land \writeSel{w} \neq \writeSel{w'} \alt {w, w' \in \writes_x \land w \neq w'} } \\
\readWriteTO[x] &\deq \bigand \left\{ \clock{w} \clockpo \clock{r} \lor \clock{r} \clockpo \clock{w} \alt {w \in \writes_x \land r \in \reads_x}\right\} \\
\readFromTO[x] &\deq \bigwedge \left\{\guard{r} \Rightarrow \bigvee \left\{ \writeSel{w} = \readSel{r} \alt {w \in \writes_x}\right\} \alt {r \in \reads_x}\right\}
\\[.75ex] \hline \\[-2.25ex]
\valueConstraintsCubi[x] &\deq \bigand \set{(\writeSel{w} = \readSel{r}) \Rightarrow \parens{ \guard{w} \land \writeval{w} = \readval{r} \land \clock{w} \clockpo \clock{r} } \alt {r\in\reads_x \land w\in\writes_x} } \\
\valueConstraintsCubii[x] &\deq \bigand \set{\parens{ \writeSel{w} = \readSel{r} \land \clock{w} \clockpo \clock{w'} \land \guard{w'} } \Rightarrow (\clock{r} \clockpo \clock{w'}) \alt {w,w'\in \writes_x \land r\in \reads_x} } \\
\cubicEnc &\deq \bigand \set{\readFromTO[x] \land \valueConstraintsCubi[x] \land \valueConstraintsCubii[x] \land \writeTO[x] \land \readWriteTO[x] \alt x \in \memorylocation } \land \ppo
\\[.75ex] \hline \\[-2.25ex]
\valueConstraintsQuadi[x] &\deq \bigwedge \left\{ (\writeSel{w} = \readSel{r}) \Rightarrow \left( \clock{w} = \supremum{r} \land \guard{w} \land \writeval{w} = \readval{r} \land \clock{w} \clockpo \clock{r} \right) \alt {r\in\reads_x \land w\in \writes_x}\right\} \\ 
\valueConstraintsQuadii[x] &\deq \bigwedge \left\{ \left(\clock{w} \clockpoeq \clock{r} \land \guard{w} \right) \Rightarrow (\clock{w} \clockpoeq \supremum{r}) \alt {r\in\reads_x \land w\in \writes_x}\right\} \\
\quadEnc &\deq \bigwedge \left\{ \readFromTO[x] \land  \valueConstraintsQuadi[x] \land \valueConstraintsQuadii[x] \land \writeTO[x] \land \readWriteTO[x] \alt x \in \memorylocation \right\} \land \ppo
\end{align*}
\vspace{-1.5em}
\caption{Given a shared memory program structure $P = \langle \events, \eventpo, \valf, \guardf \rangle$, \cubicEnc and \quadEnc encode $P$'s SC-relaxed consistency~\cite{HK2015} with a cubic and quadratic number of constraints, respectively.}
\label{fig:encoding}
\end{figure}

\vspace{-1em}
\paragraph{Partial-order encodings.}
We formalize two quantifier-free and equisatisfiable partial-order encodings of a concurrency semantics called SC-relaxed consistency~\cite{HK2015}: a cubic-sized encoding (\cubicEnc) and a quadratic-sized encoding (\quadEnc).
%Tim: Not sure this really helps the flow
%Both of these encodings will be used in~\autoref{section:experiments} to symbolically model the challenge problem from~\autoref{section:problem-challenge}.
The formula generated by each encoding is satisfiable if and only if the safety property in the shared memory program can be violated.

%This section is organized as follows: we first describe our assumptions and the shared memory program semantics. Next we describe the constraints common to the two encodings, and then focus on each individually.

% introduce SC-relaxed consistency and the assumptions we are making
%\paragraph{Assumptions.}
To get \cubicEnc and \quadEnc, we make four simplifying assumptions about the program \texttt{P} under scrutiny:
(i) \texttt{P}'s weak memory concurrency semantics equates to \emph{SC-relaxed consistency}~\cite{HK2015};
% LSH space saving
% which allows us to avoid focusing on the intricacies of some specific computer architecture (such as x86, ARM or POWER, e.g.~\cite{AMSS2012})
% or programming language.
%(such as C++11, e.g.~\cite{BOSSW2011}).
(ii) \texttt{P} is well-structured;
(iii) all loops in \texttt{P} have been unrolled so that the only remaining control-flow statements in \texttt{P} are if-then-else branches; finally,
(iv) every shared memory location accessed by \texttt{P} is known at compile-time. Avoiding these restrictions is beyond the scope of this paper that concerns itself with SMT solvers rather than program analysis techniques.

% shared memory program representation that we generate encodings on
The formulas generated by both encodings \cubicEnc and \quadEnc have three parts: (i)~\emph{clock constraints} that partially order memory accesses, similar to the happens-before relation in distributed systems~\cite{L1978}; (ii)~\emph{value constraints} that determine what values are read or written by the program if those clock constraints hold; and (iii)~\emph{selection constraints} that associate each read to a specific write event. Our symbolic partial-order encoding is therefore parameterized by three theories: \theoryClock for encoding the clock constraints, \theoryValue for encoding constraints on the symbolic program values, and \theorySelection for encoding selection constraints.
We assume that \theoryClock's signature includes strict and non-strict partial-order relations, denoted by \clockpo and \clockpoeq, respectively.
We also assume that \theoryValue's signature can encode a decidable fragment of common machine arithmetic such as bitvector or Presburger arithmetic.
\theorySelection is an uninterpreted theory.

\begin{definition}
  \label{def:shared-memory-program-structure}
  A \emph{shared memory program structure} is a tuple $P = \tuple{ \events, \eventpo, \valf, \guardf }$ where
  \events is a finite set of \emph{events},
  \eventpo is a partial order on \events,
  $\valf : \events \rightarrow \theoryValue$-terms and
  $\guardf : \events \rightarrow \theoryValue$-formulas.
  Let $\memorylocation$ be the set of memory locations.
  We assume that the set of events \events in $P$ can be partitioned into reads $\reads_x$ and writes $\writes_x$ on memory location $x \in \memorylocation$.
  Given an event $e$ in \events, let \clock{e} and $\readSel{e}$ be a \theoryClock-variable (\emph{clock variables}) and \theorySelection-variable (\emph{selection variables}), respectively.
For each read $r \in \reads$, let $\readval{r}$ be a unique \theoryValue-variable, called \emph{read variable}.
  The function $\valf$ maps a write event $w \in \writes$ to a $\theoryValue$-term $\writeval{w}$ built from read variables.
\end{definition}

The partial order \eventpo is the \emph{preserved program order} (PPO)~\cite{AMSS2012,AKT2013}.
The intuition behind PPO is that it determines which events cannot be reordered in any execution of the program. For sequentially consistent programs, the preserved program order corresponds to the order of instructions in each thread.
Note that $\tuple{\events,\eventpo}$ can be relaxed for weaker forms of consistency such as TSO, e.g.~\cite{AKT2013}. Intuitively, given an event $e$ in \events, $\guardf(e)$ denotes the necessary condition for $e$ to be enabled. The equality $\writeSel{w} = \readSel{r}$ in the theory \theorySelection means that a read event $r$ is `selected' so that its input value is equal to the output of a write event $w$. That is to say, when $\writeSel{w} = \readSel{r}$ holds, the \theoryValue-variable $\readval{r}$ is equal to the term $\writeval{w}$.

\begin{example}
\label{example:encoding}
The program described in~\autoref{section:problem-challenge} for $N=2$ corresponds to the following:
\begin{itemize}
\item $\events = \{ w_{init},r_1, w_1, r_2, w_2, r_{assert} \}$ is partitioned into $\reads_x = \{r_1, r_2, r_{assert}\}$ and $\writes_x = \{ w_{init}, w_1, w_2\}$ where $x \in \memorylocation$ is the concrete memory location accessed by threads $\texttt{T}_0$, $\texttt{T}_1$ and $\texttt{T}_2$.
\item According to PPO: $w_{init} \eventpo r_1 \eventpo w_1$,  $w_{init} \eventpo r_2 \eventpo w_2$, and $w_{init} \eventpo r_{assert}$. 
\item The $\valf$ function is defined as $\writeval{w_{init}} \deq 0$, $\writeval{w_1} \deq \readval{r_1} + 1$ and $\writeval{w_2} \deq \readval{r_2} + 1$.
  %Note that $\val{r_1}$, $\val{r_2}$ and $\val{r_{final}}$ are unconstrained because their values will be later determined by the constraints in our partial-order encodings.
\item Since the program has no if-then-else statements, $\guard{e} = \mathbf{true}$ for all events $e$ in $\events$.
\end{itemize}
\end{example}

Figure~\ref{fig:encoding} shows how to generate the cubic-size \cubicEnc and quadratic-size  \quadEnc partial-order encoding for a given shared memory program structure $P = \tuple{ \events, \eventpo, \valf, \guardf }$.
The first four formulas, $\ppo$, $\writeTO[x]$, $\readWriteTO[x]$, and $\readFromTO[x]$, are shared by \cubicEnc and \quadEnc.
The constraint \ppo encodes the preserved program order \eventpo.
%\bottom enforces a lower bound for all clocks.
The remaining constraints are with respect to some concrete memory location $x$. To model the information flow in the program, we encode a form of the \emph{read-from} relation~\cite{AMSS2012,AKT2013}. For a fixed memory location $x$ this relation defines a function from $\reads_x$ to $\writes_x$. We model this through the selection variables $\readSel{r}$ and \writeSel{w}, for each read $r \in \reads_x$ and write $w \in \writes_x$, together with the equality $\readSel{r} = \writeSel{w}$. The intuition is that the value of a write event $w \in \writes_x$ is observed by a read event $r \in \reads_x$ iff $\readSel{r} = \writeSel{w}$. The \readFromTO constraints ensures that at least one such equality holds for every read. \writeTO encodes that all writes on the same shared memory location are totally ordered in the happens-before relation and cannot have the same selection value, and \readWriteTO encodes that every read $r$ and write $w$ on the same shared memory location satisfy that $r$ happens-before $w$, or vice versa.
Note that if \clockpo is a total order, then \writeTO is equivalent to the clock and selection variables being distinct.
(In practice, the $\writeSel{w}$ variables are optimized out as distinct constants.)
The same is not true for \readWriteTO because two reads can have the same clock variables.

The main difference between \cubicEnc and \quadEnc is how they encode values being overwritten in memory. A read $r$ in $\reads_x$ can read from a write $w$ in $\writes_x$ if $w$ is the most recent write to $x$ that happens before $r$. In the case of \cubicEnc, this is encoded by \valueConstraintsCubii which corresponds to the `from-read' axiom~\cite{AMSS2012,AKT2013}, also known as the `conflict relation'~\cite{BDM2013}. This formula introduces a cubic number of constraints.
By contrast, \quadEnc encodes the \valueConstraintsQuadii constraint that requires only a quadratic number of constraints. For this, \valueConstraintsQuadii introduces a new variable \supremum{r} for every read $r$ in $\reads_x$ to encode the least upper bound (supremum) of all writes in $\writes_x$ that happen-before $r$. Since the set $\set{\clock{w} \alt w \in \writes_x}$, for all memory locations $x$, is totally ordered with respect to \clockpo in \theoryClock by $\writeTO[x]$, \supremum{r} is the maximum of all writes in $\writes_x$ that happen-before $r$ in $\reads_x$ according to \clockpo.
It was previously shown in~\cite[Theorem~4]{HK2015} that for a given shared memory program structure $P$ the formulas $\cubicEnc$ and $\quadEnc$ are equisatisfiable.

\section{Lower Bounds for Quadratic and Cubic Encodings}
\label{section:lowerbounds}
We show that the challenge problem from section~\ref{section:problem-challenge}
requires \DPLLT to enumerate at least $N!$ theory conflicts before it finds a proof of unsatisfiability, for either of the \cubicEnc or \quadEnc encoding where $N$ is the number of threads.

We begin by constructing a formula that encodes the challenge program using the \cubicEnc encoding.
As \cubicEnc is not directly in CNF, we perform the following simplifications in order to apply Theorem~\ref{lem:noninterference}:
(i) all of the guards $\guard{e}$ are ignored because they always evaluate to $\True$, and
(ii) implications are distributed across conjunctions in
the $\valueConstraintsCubi[x]$ constraints
[$A \imp (B \land C)$ iff $\parens{A \imp B} \land \parens{A \imp C}$].
We also assume that \clockpo is a total order in \theoryClock, and that \theoryValue is either bit-vector, Presburger, or real arithmetic.
We denote by $\Th$ the standard combined theory $\theoryClock + \theoryValue + \theorySelection$.
%(As $\theoryClock$, $\theoryValue$, and $\theorySelection$ are signature disjoint, we do not worry about theory combination.)
Figure~\ref{fig:phicube} shows the resulting quantifier-free \Th-formula, denoted by $\phi^3$. Note that $\phi^3$ is in CNF if we interpret implications in the obvious way.
\begin{figure}
\begin{align*}
  \phi^3 \equiv &
  \underbrace{\clock{w_{init}} \clockpo \clock{r_{assert}}}_{ \ppo}
  \land
  \underbrace{\bigand_{i=1 \ldots N} \clock{w_{init}} \clockpo \clock{r_i} \clockpo \clock{w_i}}_{\ppo}
  \land  
  \underbrace{\bigand_{w,w' \in \writes, w \neq w'} \clock{w} \neq \clock{w'} \land \writeSel{w} \neq \writeSel{w'}}_{\writeTO[x]}
  \land
  %% \underbrace{\bigand_{w,w' \in \writes} }_{\writeTO[x]}
  %% \land
  \underbrace{\bigand_{w \in \writes, r\in\reads} \clock{w} \neq \clock{r}}_{\readWriteTO[x]}
  \land
  \\ &
  \underbrace{\bigand_{w \in \writes, r \in \reads} (\writeSel{w} = \readSel{r}) \Rightarrow \clock{w} \clockpo \clock{r}}_{\valueConstraintsCubi[x]}
  \land
  \underbrace{\bigand_{r \in \reads} (\writeSel{w_{init}} = \readSel{r}) \Rightarrow 0 = \readval{r}}_{\valueConstraintsCubi[x]}
  \land
  \underbrace{\bigand_{i=1 \ldots N, r \in \reads} (\writeSel{w_i} = \readSel{r}) \Rightarrow \readval{r_i} + 1 = \readval{r}}_{\valueConstraintsCubi[x]}
  \\ &
  \underbrace{\bigand_{w,w' \in \writes, r \in \reads}
    \parens{\writeSel{w} = \readSel{r} \land \clock{w} \clockpo \clock{w'}}
    \Rightarrow \clock{r} \clockpo \clock{w'}}_{\valueConstraintsCubii[x]}
  \land  
  \underbrace{\bigand_{r \in \reads} \parens{\bigor_{w \in \writes} \writeSel{w} = \readSel{r} } }_{\readFromTO[x]}
  \land
  \underbrace{\readval{r_{assert}} > N}_{\mathbf{assert}(v_0 \leq N)}
\end{align*}
\vspace{-1.5em}
\caption{The \cubicEnc encoding for the challenge problem (when $\clockpo$ is total).}
\label{fig:phicube}
\end{figure}
%In Figure~\ref{fig:phicube}, the source of each group of constraints appears in the under brace.
Note that in the $\valueConstraintsCubi[x]$ constraints, each $\writeval{w}$ term  has been replaced by either $0$ or $\readval{r_i} + 1$.

Let $S_N$ be the set of all permutations over $[1,N]$. Consider the following sequence of events that can be constructed from the permutation function $\pi$ in $S_N$:
\[
\sigma(\pi): w_{init}, r_{\pi(1)}, w_{\pi(1)}, r_{\pi(2)}, w_{\pi(2)}, \ldots, r_{\pi(N)}, w_{\pi(N)}, r_{assert}.
\]
%For example, $\sigma(\pi) = w_{init}, r_{1}, w_{1}, r_{2}, w_{2}, \ldots, r_N, w_N, r_{assert}$ for $\pi = \tuple{1, 2, \ldots, N}$ in $S_N$. 
The run of $\sigma(\pi)$ corresponds to satisfying the following clock and selection constraints:
\[
\clock{w_{init}}
\clockpo \clock{r_{\pi(1)}}
\clockpo \clock{w_{\pi(1)}}
\clockpo \cdots
\clockpo \clock{r_{assert}}, \quad
\writeSel{w_{init}} = \readSel{r_{\pi(1)}}, \quad
\bigand_{i=1 \ldots N-1} \writeSel{w_{\pi(i)}} = \readSel{r_{\pi(i+1)}},\:\text{and} \quad
\writeSel{w_{\pi(N)}} = \readSel{r_{assert}}
\]
with distinct values for all $\writeSel{w}$ variables.
A first-order variable assignment $\VarAssign_\pi$ can be constructed to satisfy the above constraints.
(An explicit construction of $\VarAssign_\pi$ and proofs for Lemma~\ref{thm:phi:crit} and Theorem~\ref{thm:square:lb} are given in Appendix~\ref{app:proof}.)
For each \theoryClock or \theorySelection literal $\Lit$, we include $\prop{\Lit}$ in an assignment $\assign_\pi$ if $\Lit$ holds under $\VarAssign_\pi$.
%To make $\assign_\pi$ complete, $\assign_\pi$ must contain the abstraction of the \theoryValue literals.
%Let the $\Th$-conflict $\TConf_\pi$ be the following set of \theoryValue literals:
Consider the following $\theoryValue$-conflict:
\begin{equation*}
  \TConf_\pi =
  \set{\readval{r_{\pi(1)}} = 0}
  \cup
  \set{\readval{r_{\pi(i)}} + 1 = \readval{r_{\pi(i+1)}} \alt i = 1 \ldots N-1}
  \cup
  \set{\readval{r_{\pi(N)}} + 1 = \readval{r_{assert}}}
  \cup
  \set{\readval{r_{assert}} > N}.
\end{equation*}
Note that each $\Lit \in \TConf_\pi$ is unit-propagated by the \theoryClock and \theorySelection literals already in $\assign_\pi$ on the propositional abstraction of $\phi^3$.
We add $\PConf{\pi}$ to $\assign_\pi$.
The remaining \theoryValue equality atoms in $\phi^3$ are added negatively.
Now $\assign_\pi$ satisfies the propositional abstraction of $\phi^3$.
\begin{lemma}
  \label{thm:phi:crit}
  The assignment $\assign_\pi$ is a critical assignment for $\phi^3$ with the theory conflict $\TConf_\pi$.
\end{lemma}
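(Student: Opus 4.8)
The plan is to verify the two clauses of the definition of a critical assignment directly: that $\assign_\pi$ is a total assignment satisfying the initial propositional abstraction of $\phi^3$, and that $\TConf_\pi$ is the \emph{unique} minimal $\Th$-conflict whose propositional abstraction is contained in $\assign_\pi$. For the first requirement I would walk through the clause groups of $\phi^3$. The $\ppo$, $\writeTO[x]$, $\readWriteTO[x]$ clauses and the clock consequents of $\valueConstraintsCubii[x]$ contain only \theoryClock- and \theorySelection-literals, all of which are set in $\assign_\pi$ according to $\VarAssign_\pi$; since $\sigma(\pi)$ is a total interleaving extending the program order, these all hold. The $\readFromTO[x]$ clauses hold because $\sigma(\pi)$ selects exactly one write per read. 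The value implications in $\valueConstraintsCubi[x]$ are satisfied because the only selection equalities true under $\assign_\pi$ are exactly those of $\sigma(\pi)$, and for each such equality the matching value consequent is precisely one of the literals placed positively via $\PConf{\pi}$, while all remaining value implications have false antecedents. Finally $\readval{r_{assert}} > N$ lies in $\PConf{\pi}$. This part is routine bookkeeping that makes precise the claim already asserted in the text.

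The key structural observation for the second requirement is that $\Th = \theoryClock + \theoryValue + \theorySelection$ is a combination of signature-disjoint theories over disjoint sorts with no shared variables: clock atoms mention only clock variables, selection atoms only selection variables, and value atoms only read variables. Hence a conjunction of $\Th$-literals is $\Th$-unsatisfiable if and only if its restriction to one of the three theories is already unsatisfiable, so every minimal $\Th$-conflict contained in $\assign_\pi$ must live entirely within a single theory. I would then eliminate two of the three theories at once: the \theoryClock- and \theorySelection-literals asserted by $\assign_\pi$ (both positive and negative) are exactly those true under the single first-order model $\VarAssign_\pi$, hence jointly satisfiable, so neither theory contributes a conflict. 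This reduces the uniqueness question to an analysis purely inside \theoryValue.

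Next I would check that $\TConf_\pi$ is a minimal \theoryValue-conflict. Chaining the equalities yields $\readval{r_{\pi(1)}} = 0$, $\readval{r_{\pi(i+1)}} = \readval{r_{\pi(i)}} + 1$, and $\readval{r_{assert}} = \readval{r_{\pi(N)}} + 1$, forcing $\readval{r_{assert}} = N$ and contradicting $\readval{r_{assert}} > N$; this holds uniformly whether \theoryValue is \theoryReals, \theoryInts (Presburger), or \theoryBV, under the standing assumption that the bit-width is wide enough to represent $0,\ldots,N+1$ without wraparound. Minimality follows because deleting any single literal either removes the bound or breaks the chain, leaving the segment containing $\readval{r_{assert}}$ unconstrained from below, so it can be shifted to satisfy $\readval{r_{assert}} > N$.

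The hard part will be the uniqueness argument. Here I would show that every \theoryValue-unsatisfiable subset $S$ of the value literals of $\assign_\pi$ must contain all of $\TConf_\pi$. The positive value literals in $\assign_\pi$ are exactly the members of $\TConf_\pi$, while every other value literal is a disequality added negatively. Since disequalities impose no lower bound and can always be repaired by a small perturbation in an infinite domain (or one with sufficient slack, in the bit-vector case), no conflict can depend on them: any \theoryValue-unsatisfiable set must derive a contradiction with $\readval{r_{assert}} > N$, and over these arithmetic theories this requires pinning $\readval{r_{assert}}$ to a value $\le N$ through the full anchored chain, i.e. $\readval{r_{\pi(1)}} = 0$, the successor equalities, and $\readval{r_{\pi(N)}} + 1 = \readval{r_{assert}}$, all of which belong to $\TConf_\pi$. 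Hence $S \supseteq \TConf_\pi$. Combined with the minimality of $\TConf_\pi$, this forces any minimal \theoryValue-conflict in $\assign_\pi$ to equal $\TConf_\pi$, and by the theory-localization step $\TConf_\pi$ is therefore the unique minimal $\Th$-conflict inside $\assign_\pi$. Consequently $\assign_\pi$ is a critical assignment with conflict $\TConf_\pi$, as required.
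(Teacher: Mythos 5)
Your route differs genuinely from the paper's: the paper never localizes conflicts to a single theory. Its entire proof consists of, for each literal $\Lit \in \TConf_\pi$, exhibiting a concrete extension $\VarAssign_\pi^{\Lit}$ of $\VarAssign_\pi$ (explicit numeric values for all read variables) under which every literal of $\PreAbs_\pi$ except $\Lit$ holds. That one construction yields at once the satisfiability of the clock/selection literals, of all the negated value equalities, and of every subset of $\PreAbs_\pi$ omitting even one literal of $\TConf_\pi$; minimality and uniqueness of $\TConf_\pi$ then follow immediately. Your decomposition --- disjoint variables, hence every minimal $\Th$-conflict lives inside one theory, and the clock/selection fragments are satisfiable because they are exactly the literals true under $(\Model,\VarAssign_\pi)$ --- is sound and is an attractive structural alternative to the paper's monolithic model construction.

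However, there is a genuine gap in your uniqueness step. The principle you invoke --- ``disequalities impose no lower bound and can always be repaired by a small perturbation, so no conflict can depend on them'' --- is not valid in general: once the equalities of a candidate conflict rigidly link two variables, a disequality between them has a truth value forced by the entailed offset, and no perturbation (nor any shift of whole components) can repair it. Conflicts of the form ``equalities plus one negated equality'' arise exactly when the equalities entail the complementary atom; for instance, in a variant of the benchmark where threads increment by $0$ rather than $1$, the set $\set{0 = \readval{r_{\pi(1)}},\ \readval{r_{\pi(1)}} = \readval{r_{\pi(2)}},\ \lnot(0 = \readval{r_{\pi(2)}})}$ would be a minimal conflict that does depend on a disequality. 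So what your argument actually requires is a benchmark-specific check that you never perform: no subset of the chain equalities in $\TConf_\pi$ entails any value atom of $\phi^3$ outside $\TConf_\pi$. This is true here only because of the $+1$ offsets --- under any (partial) chain the values satisfy $\readval{r_{\pi(j)}} = j-1$ up to a common shift per rigid component, so atoms of the form $u + 1 = v$ or $0 = v$ are entailed precisely for adjacent chain pairs and the anchor, i.e.\ only for atoms already in $\TConf_\pi$. The same verification is what licenses your two unproved assertions that any unsatisfiable set ``must derive a contradiction with $\readval{r_{assert}} > N$'' and that this ``requires the full anchored chain.'' The check is a short offset computation (it is done implicitly by the paper's explicit assignments $\VarAssign_\pi^{\Lit}$, which must and do satisfy all the negated equalities), but without it your conclusion ``hence $S \supseteq \TConf_\pi$'' is unsupported.
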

  
\begin{theorem}
  \label{thm:cube:lb}
  All Fixed-Alphabet-\DPLLT proofs for $\phi^3$
  contain at least $N!$ applications of $\ThLearn$.
\end{theorem}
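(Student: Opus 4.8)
The plan is to obtain the bound as a direct application of Theorem~\ref{lem:noninterference}, instantiated with the family
\[
\Crits = \set{ \assign_\pi \alt \pi \in S_N }
\]
of critical assignments indexed by the $N!$ permutations in $S_N$. By Lemma~\ref{thm:phi:crit}, each $\assign_\pi$ is already known to be a critical assignment for $\phi^3$ whose unique minimal $\Th$-conflict is $\TConf_\pi$. Hence, to invoke the theorem, I only need to verify two things: that the map $\pi \mapsto \assign_\pi$ is injective, so that $|\Crits| = N!$, and that $\Crits$ is non-interfering.

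Both points will follow from a single structural observation about the construction of $\assign_\pi$. By construction, $\PConf{\pi}$ is added to $\assign_\pi$ while every \emph{other} \theoryValue-equality atom of $\phi^3$ is added negatively; therefore the value equalities occurring \emph{positively} in $\assign_\pi$ are exactly the equality atoms of $\TConf_\pi$, namely
\[
\readval{r_{\pi(1)}} = 0, \qquad
\readval{r_{\pi(i)}} + 1 = \readval{r_{\pi(i+1)}}\ (1 \le i < N), \qquad
\readval{r_{\pi(N)}} + 1 = \readval{r_{assert}}.
\]
I would first check that each of these is a genuine \theoryValue-atom of $\phi^3$ (they arise from the $(\writeSel{w_{init}} = \readSel{r})$, $(\writeSel{w_i} = \readSel{r})$ implications and the assertion), so that they are available to be negated in other assignments. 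The crucial point is that this ``$+1$-chain'' determines $\pi$ uniquely: the read forced to $0$ identifies $\pi(1)$, and following the successor equalities $\readval{r_a}+1 = \readval{r_b}$ recovers $\pi(2),\pi(3),\ldots$ in turn. Consequently $\pi \neq \pi'$ forces $\TConf_\pi$ and $\TConf_{\pi'}$ to have different equality parts, which already gives injectivity of $\pi \mapsto \assign_\pi$ and hence $|\Crits| = N!$.

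For non-interference, fix $\pi \neq \pi'$. The equality parts of $\TConf_\pi$ and $\TConf_{\pi'}$ are distinct finite sets of the same cardinality $N+1$, so there is a value equality $e$ lying in $\TConf_\pi$ but not in $\TConf_{\pi'}$. Since $e$ is a \theoryValue-equality atom of $\phi^3$ outside $\TConf_{\pi'}$, the construction of $\assign_{\pi'}$ inserts $\lnot\prop{e}$ into $\assign_{\pi'}$, whence $\prop{e} \notin \assign_{\pi'}$. As $\prop{e} \in \PConf{\pi}$, this yields $\PConf{\pi} \not\subseteq \assign_{\pi'}$, which is precisely the non-interference condition. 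Theorem~\ref{lem:noninterference} then delivers at least $|\Crits| = N!$ applications of $\ThLearn$ in every Fixed-Alphabet-\DPLLT proof of $\phi^3$.

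The main obstacle is the non-interference step: one must rule out that a single learned $\Th$-conflict could discharge two distinct critical assignments at once. The argument rests entirely on the fact that each conflict $\TConf_\pi$ \emph{encodes its own permutation} through the chain of value equalities, combined with the fact that $\assign_{\pi'}$ explicitly \emph{negates} every value-equality atom outside $\TConf_{\pi'}$. The detail I would be most careful about is confirming that all $N+1$ chain equalities are actual atoms of $\phi^3$ (rather than derived facts), since only atoms of the fixed alphabet can be forced negatively into $\assign_{\pi'}$; this is what makes the equality $e$ genuinely absent from $\assign_{\pi'}$ and closes the argument.
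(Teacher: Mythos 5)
Your proposal is correct and takes essentially the same route as the paper: both instantiate Theorem~\ref{lem:noninterference} with $\Crits = \set{\assign_\pi \alt \pi \in S_N}$, rely on Lemma~\ref{thm:phi:crit} for criticality, and derive non-interference from the construction fact that every \theoryValue-equality atom of $\phi^3$ outside $\TConf_{\pi'}$ appears negated in $\assign_{\pi'}$. The only (immaterial) difference is how the differentiating atom is produced: the paper selects a concrete index $k$ with $\tuple{r_{\pi(k)}, r_{\pi(k+1)}} \neq \tuple{r_{\pi'(k)}, r_{\pi'(k+1)}}$ and uses the successor equality $\readval{r_{\pi(k)}} + 1 = \readval{r_{\pi(k+1)}}$, whereas you argue abstractly that the two equality chains encode their permutations and hence are distinct sets of equal cardinality, so neither is contained in the other.
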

\begin{proof}
  Let $\Crits = \set{\assign_\pi \alt \pi \in S_N}$.
  For each pair of distinct $\pi$ and $\pi'$  in $S_N$,
  there is some adjacent pair of events with a different order in $\sigma(\pi)$ and $\sigma(\pi')$.
  Select $k$ so that $\tuple{r_{\pi(k)}, r_{\pi(k+1)}} \neq \tuple{r_{\pi'(k)}, r_{\pi'(k+1)}}$.
  The literal $\prop{(\readval{r_{\pi(k)}} + 1 = \readval{r_{\pi(k+1)}})}$
  is in  $\TConf_\pi$ and is not in $\assign_{\pi'}$.
  Thus $\PConf{\pi}$ is not a subset of $\assign_{\pi'}$, and $\Crits$ is non-interfering.
  The lemma follows directly from Theorem~\ref{lem:noninterference}.
\end{proof}

\begin{theorem}
  \label{thm:square:lb}
  Let $\phi^2$ be the \quadEnc encoding of the challenge problem.
  All Fixed-Alphabet-\DPLLT proofs that $\phi^2$ are unsatisfiable 
  contain at least $N!$ application of $\ThLearn$.
\end{theorem}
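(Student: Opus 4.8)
The plan is to reduce Theorem~\ref{thm:square:lb} to exactly the machinery already used for the cubic encoding in Theorem~\ref{thm:cube:lb}, exploiting the fact that the value-flow part of \quadEnc coincides with that of \cubicEnc. Concretely, I would reuse the permutation-indexed family $\set{\assign_\pi \alt \pi \in S_N}$ together with the single \theoryValue-conflict $\TConf_\pi$, then verify that each $\assign_\pi$ is still critical for $\phi^2$ and that the family remains non-interfering, so that Theorem~\ref{lem:noninterference} yields the $N!$ lower bound directly.

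The key observation is that $\valueConstraintsQuadi[x]$ differs from $\valueConstraintsCubi[x]$ only by the extra conjunct $\clock{w} = \supremum{r}$ in the consequent; the value equalities $\writeval{w} = \readval{r}$ — and hence the whole increment chain $\readval{r_{\pi(1)}} = 0$, $\readval{r_{\pi(i)}} + 1 = \readval{r_{\pi(i+1)}}$, $\readval{r_{\pi(N)}} + 1 = \readval{r_{assert}}$ together with $\readval{r_{assert}} > N$ — are literally the same. So $\TConf_\pi$ is still a \theoryValue-conflict unit-propagated by exactly the same selection literals $\writeSel{w} = \readSel{r}$ in $\assign_\pi$. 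The only genuinely new work is to extend the first-order witness $\VarAssign_\pi$ to the supremum variables. Following the run $\sigma(\pi)$, I would set $\supremum{r}$ to the clock of the most recent write preceding $r$: $\supremum{r_{\pi(1)}} \deq \clock{w_{init}}$, $\supremum{r_{\pi(k+1)}} \deq \clock{w_{\pi(k)}}$ for $1 \le k < N$, and $\supremum{r_{assert}} \deq \clock{w_{\pi(N)}}$. Under this choice the selected write for each read is precisely the one whose clock equals $\supremum{r}$, so the new conjuncts $\clock{w} = \supremum{r}$ of $\valueConstraintsQuadi[x]$ hold; and since all write clocks are totally ordered by $\writeTO[x]$ while $\readWriteTO[x]$ forces $\clock{w} \neq \clock{r}$, every write with $\clock{w} \clockpoeq \clock{r}$ in fact strictly precedes $r$ and therefore satisfies $\clock{w} \clockpoeq \supremum{r}$, validating $\valueConstraintsQuadii[x]$. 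Deciding these new clock-equality and supremum atoms according to $\VarAssign_\pi$ (and the remaining \theoryValue equalities negatively, as before) then gives an $\assign_\pi$ that propositionally satisfies $\phi^2$.

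Criticality — the analog of Lemma~\ref{thm:phi:crit} — follows by arguing that $\TConf_\pi$ is still the unique minimal \Th-conflict inside $\assign_\pi$: because $\VarAssign_\pi$ is a genuine model of every clock, selection, and supremum constraint, the only source of \Th-inconsistency is the value chain, exactly as in the cubic case. Non-interference is then verbatim the argument in Theorem~\ref{thm:cube:lb}: for distinct $\pi, \pi'$ pick an index $k$ with $\tuple{r_{\pi(k)}, r_{\pi(k+1)}} \neq \tuple{r_{\pi'(k)}, r_{\pi'(k+1)}}$, so the literal $\prop{(\readval{r_{\pi(k)}} + 1 = \readval{r_{\pi(k+1)}})}$ lies in $\PConf{\pi}$ but not in $\assign_{\pi'}$, whence $\PConf{\pi} \not\subseteq \assign_{\pi'}$. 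Applying Theorem~\ref{lem:noninterference} to $\Crits = \set{\assign_\pi \alt \pi \in S_N}$ gives at least $|S_N| = N!$ applications of \ThLearn.

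The step I expect to be the main obstacle is confirming that the supremum apparatus introduces no \emph{second} minimal conflict: the fresh atoms $\clock{w} = \supremum{r}$ and the $\valueConstraintsQuadii[x]$ implications couple the clock order to the supremum variables, and I must check that no subset of $\assign_\pi$ built from these atoms is \Th-unsatisfiable on its own. The cleanest way to discharge this is to exhibit the extended $\VarAssign_\pi$ explicitly (as deferred to Appendix~\ref{app:proof}) and verify that it satisfies $\phi^2$ with $\TConf_\pi$ removed, so that deleting any literal of $\TConf_\pi$ leaves a \Th-satisfiable set and $\TConf_\pi$ is indeed the only minimal conflict.
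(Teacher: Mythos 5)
Your proposal is correct and follows essentially the same route as the paper: the paper's proof likewise extends $\VarAssign_\pi$ to the supremum variables exactly as you do (setting $\supremum{r_{\pi(1)}}$ to $\clock{w_{init}}$, $\supremum{r_{\pi(i)}}$ to $\clock{w_{\pi(i-1)}}$, and $\supremum{r_{assert}}$ to $\clock{w_{\pi(N)}}$), reuses the constructions of $\PreAbs_\pi$, $\assign_\pi$, $\VarAssign_\pi^\Lit$, and $\Crits$ from the cubic case, and invokes Theorem~\ref{lem:noninterference}. Your write-up is in fact more explicit than the paper's terse appendix proof about verifying $\valueConstraintsQuadi[x]$, $\valueConstraintsQuadii[x]$, and the absence of a second minimal conflict, but the underlying argument is the same.
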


An important difference between the diamond benchmarks and this problem is that for diamonds it is reasonable to describe all minimal \Th-conflicts as they each also correspond to critical models.
For the \texttt{fkp} problem, the encoding is more complex, and there are other classes of $\Th$-conflicts.
The set $\Crits$ identifies those $\Th$-lemmas that \emph{must} appear during solving.

%%% Local Variables:
%%% mode: latex
%%% TeX-master: "paper"
%%% End:

%%  LocalWords:  unsatisfiability iff Presburger CNF init fkp
%%  LocalWords:  unsatisfiable

\section{Experiments}
\label{section:experiments}

In this section, we give experimental results that confirm the lower bounds on the \DPLLT proofs for the two encodings of the problem challenge (\autoref{section:problem-challenge}).
Our experiments are carried out along three dimensions: we use four SMT solvers (Boolector v2.0.6~\cite{boolector}, CVC4 2015-03-14~\cite{cvc4}, Yices v2.3.0~\cite{yices2}, and Z3 2015-03-29~\cite{z3}), and we evaluate both the cubic-size and quadratic-size encoding (\cubicEnc~and~\quadEnc) with respect to four different SMT-LIB theory combinations.

We performed all experiments on a 64-bit machine running GNU/Linux 3.16 with 2 Intel Xeon 2.5\,GHz cores and 4\,GB of memory.
The timeout for each individual benchmark is 1 hour.
Recall that \cubicEnc~and~\quadEnc are parameterized by three theories, \theoryClock, \theorySelection and \theoryValue.
We experiment with the theory of reals $\theoryReals$,
the theory of integers $\theoryInts$, and
the theory of bit-vectors $\theoryBV$.
In our experiments, we instantiate $\tuple{\theoryClock,\theorySelection, \theoryValue}$ to four configurations such that \theoryClock $=$ \theorySelection:
\begin{center}
\begin{tabular}{ll}
  (1) ``real-clocks-int-val'': $\tuple{\theoryReals,\theoryReals, \theoryInts}$, &
  (3) ``bv-clocks-int-val'': $\tuple{\theoryBV,\theoryBV, \theoryInts}$, and \\
  (2) ``real-clocks-bv-val'': $\tuple{\theoryReals,\theoryReals,\theoryBV}$, &
  (4) ``bv-clocks-bv-val'': $\tuple{\theoryBV,\theoryBV, \theoryBV}$.
\end{tabular}
\end{center}
CVC4 and Z3 were run on all benchmarks.
Boolector is only used on the fourth configuration, i.e. purely $\theoryBV$ benchmarks.
Yices was run on the ``real-clocks-int-val'' and ``bv-clocks-bv-val'' configurations.
We further distinguish between the SMT-LIB benchmarks by labelling them with \cubicEnc~or~\quadEnc.
For example, `real-clocks-bv-val-\cubicEnc' identifies benchmarks generated with the cubic encoding in which \theoryClock, \theorySelection and \theoryValue are respectively instantiated as $\theoryReals$, $\theoryReals$, and $\theoryBV$.

For all the ``*-bv-val'' benchmarks (except CVC4 for ``real-clocks-bv-val''), the solvers are essentially encoding the problem in propositional logic and using a SAT solver.\footnote{
  CVC4 was run with the flag \texttt{--bitblast=eager} on ``bv-clocks-bv-val'' benchmarks~\cite{HBJBT2014}.
}
The process of encoding into propositional logic (\emph{bit-blasting}) enables the solver to learn clauses not necessarily expressible in the original alphabet of the input atoms.
We therefore call these solver and configuration pairs \emph{bit-blasted combinations}. All other solver and configuration pairs are called \emph{\DPLLT combinations}. The \DPLLT combinations are the ``*-int-val'' configurations, and the run of CVC4 on ``real-clocks-bv-val''.\footnote{In this configuration CVC4 does not eagerly reduce \theoryBV to SAT.} 
\DPLLT combinations use Fixed-Alphabet-\DPLLT proofs, whereas bit-blasted combinations generally do not.

Given an instantiation of $\tuple{\theoryClock,\theorySelection, \theoryValue}$, we separately encode the \texttt{fkp2013-unsat} concurrency benchmarks with \cubicEnc~and~\quadEnc for all $N \in [3,9]$.
There are a total of $56$ different unsatisfiable SMT-LIB benchmarks.
The size of each benchmark depends on $N$ and whether we used \cubicEnc~or~\quadEnc. For example, for $N = 9$, the total number of symbolic expressions in \cubicEnc is 4085, whereas \quadEnc yields only 1604 symbolic expressions.

\begin{figure}
\centering
\begin{tikzpicture}
\begin{axis}[grid=major,
	width=.63*\textwidth,
	ymode=log,
	xtick=data,
	font=\scriptsize,
	%title=fkp2013-unsat,
	xlabel=Number of threads ($N$),
	ylabel=Number of SAT conflicts,
	]
\addplot[cyan, very thick, solid] table [x=N, y=conflicts, col sep=comma] {data-cvc4-fkp2013-real-clocks-int-values-sup-off-unsat-conflicts.csv};
\label{plot:cvc4-real-clocks-int-values-sup=off}
\addplot[cyan, thin, solid] table [x=N, y=conflicts, col sep=comma] {data-cvc4-fkp2013-real-clocks-int-values-sup-on-unsat-conflicts.csv};
\label{plot:cvc4-real-clocks-int-values-sup=on}
\addplot[red, very thick, solid] table [x=N, y=conflicts, col sep=comma] {data-cvc4-fkp2013-real-clocks-bv-values-sup-off-unsat-conflicts.csv};
\label{plot:cvc4-real-clocks-bv-values-sup=off}
\addplot[red, thin, solid] table [x=N, y=conflicts, col sep=comma] {data-cvc4-fkp2013-real-clocks-bv-values-sup-on-unsat-conflicts.csv};
\label{plot:cvc4-real-clocks-bv-values-sup=on}
\addplot[green, very thick, solid] table [x=N, y=conflicts, col sep=comma] {data-cvc4-fkp2013-bv-clocks-int-values-sup-off-unsat-conflicts.csv};
\label{plot:cvc4-bv-clocks-int-values-sup=off}
\addplot[green, thin, solid] table [x=N, y=conflicts, col sep=comma] {data-cvc4-fkp2013-bv-clocks-int-values-sup-on-unsat-conflicts.csv};
\label{plot:cvc4-bv-clocks-int-values-sup=on}
\addplot[blue, very thick, dotted] table [x=N, y=conflicts, col sep=comma] {data-cvc4-fkp2013-bv-clocks-bv-values-sup-off-unsat-conflicts.csv};
\label{plot:cvc4-bv-clocks-bv-values-sup=off}
\addplot[blue, thin, dotted] table [x=N, y=conflicts, col sep=comma] {data-cvc4-fkp2013-bv-clocks-bv-values-sup-on-unsat-conflicts.csv};
\label{plot:cvc4-bv-clocks-bv-values-sup=on}
\addplot[pink, very thick, solid] table [x=N, y=conflicts, col sep=comma] {data-z3-fkp2013-real-clocks-int-values-sup-off-unsat-conflicts.csv};
\label{plot:z3-real-clocks-int-values-sup=off}
\addplot[pink, thin, solid] table [x=N, y=conflicts, col sep=comma] {data-z3-fkp2013-real-clocks-int-values-sup-on-unsat-conflicts.csv};
\label{plot:z3-real-clocks-int-values-sup=on}
\addplot[gray, very thick, dotted] table [x=N, y=conflicts, col sep=comma] {data-z3-fkp2013-real-clocks-bv-values-sup-off-unsat-conflicts.csv};
\label{plot:z3-real-clocks-bv-values-sup=off}
\addplot[gray, thin, dotted] table [x=N, y=conflicts, col sep=comma] {data-z3-fkp2013-real-clocks-bv-values-sup-on-unsat-conflicts.csv};
\label{plot:z3-real-clocks-bv-values-sup=on}
\addplot[cyan, very thick, solid] table [x=N, y=conflicts, col sep=comma] {data-z3-fkp2013-bv-clocks-int-values-sup-off-unsat-conflicts.csv};
\label{plot:z3-bv-clocks-int-values-sup=off}
\addplot[cyan, thin, solid] table [x=N, y=conflicts, col sep=comma] {data-z3-fkp2013-bv-clocks-int-values-sup-on-unsat-conflicts.csv};
\label{plot:z3-bv-clocks-int-values-sup=on}
\addplot[brown, very thick, dotted] table [x=N, y=conflicts, col sep=comma] {data-z3-fkp2013-bv-clocks-bv-values-sup-off-unsat-conflicts.csv};
\label{plot:z3-bv-clocks-bv-values-sup=off}
\addplot[brown, thin, dotted] table [x=N, y=conflicts, col sep=comma] {data-z3-fkp2013-bv-clocks-bv-values-sup-on-unsat-conflicts.csv};
\label{plot:z3-bv-clocks-bv-values-sup=on}
\addplot[purple, very thick, solid] table [x=N, y=conflicts, col sep=comma] {data-yices-fkp2013-real-clocks-int-values-sup-off-unsat-conflicts.csv};
\label{plot:yices-real-clocks-int-values-sup=off}
\addplot[purple, thin, solid] table [x=N, y=conflicts, col sep=comma] {data-yices-fkp2013-real-clocks-int-values-sup-on-unsat-conflicts.csv};
\label{plot:yices-real-clocks-int-values-sup=on}
\addplot[lime, very thick, dotted] table [x=N, y=conflicts, col sep=comma] {data-yices-fkp2013-bv-clocks-bv-values-sup-off-unsat-conflicts.csv};
\label{plot:yices-bv-clocks-bv-values-sup=off}
\addplot[lime, thin, dotted] table [x=N, y=conflicts, col sep=comma] {data-yices-fkp2013-bv-clocks-bv-values-sup-on-unsat-conflicts.csv};
\label{plot:yices-bv-clocks-bv-values-sup=on}
\addplot[teal, very thick, dotted] table [x=N, y=conflicts, col sep=comma] {data-boolector-fkp2013-bv-clocks-bv-values-sup-off-unsat-conflicts.csv};
\label{plot:boolector-bv-clocks-bv-values-sup=off}
\addplot[teal, thin, dotted] table [x=N, y=conflicts, col sep=comma] {data-boolector-fkp2013-bv-clocks-bv-values-sup-on-unsat-conflicts.csv};
\label{plot:boolector-bv-clocks-bv-values-sup=on}
  \addplot[black, thin, solid] table [x=N, y=factorial, col sep=comma] {factorial-of-n-up-to-9.csv} node[right,pos=1] {$N!$};
\end{axis}
% Legend
    \matrix[
        matrix of nodes,
        anchor=south west,
        nodes={font=\scriptsize},
      ] at([xshift=22em]current axis.south west)
      {
\refentry{cvc4-real-clocks-int-val} & |[text width=\axiswidth]|cvc4-real-clocks-int-val-\{\cubicEnc, \quadEncN\} \\
\refentry{cvc4-real-clocks-bv-val} & |[text width=\axiswidth]|cvc4-real-clocks-bv-val-\{\cubicEnc, \quadEncN\} \\
\refentry{cvc4-bv-clocks-int-val} & |[text width=\axiswidth]|cvc4-bv-clocks-int-val-\{\cubicEnc, \quadEncN\} \\
\refentry{cvc4-bv-clocks-bv-val} & |[text width=\axiswidth]|cvc4-bv-clocks-bv-val-\{\cubicEnc, \quadEncN\} \\
\refentry{z3-real-clocks-int-val} & |[text width=\axiswidth]|z3-real-clocks-int-val-\{\cubicEnc, \quadEncN\} \\
\refentry{z3-real-clocks-bv-val} & |[text width=\axiswidth]|z3-real-clocks-bv-val-\{\cubicEnc, \quadEncN\} \\
\refentry{z3-bv-clocks-int-val} & |[text width=\axiswidth]|z3-bv-clocks-int-val-\{\cubicEnc, \quadEncN\} \\
\refentry{z3-bv-clocks-bv-val} & |[text width=\axiswidth]|z3-bv-clocks-bv-val-\{\cubicEnc, \quadEncN\} \\
\refentry{yices-real-clocks-int-val} & |[text width=\axiswidth]|yices-real-clocks-int-val-\{\cubicEnc, \quadEncN\} \\
\refentry{yices-bv-clocks-bv-val} & |[text width=\axiswidth]|yices-bv-clocks-bv-val-\{\cubicEnc, \quadEncN\} \\
\refentry{boolector-bv-clocks-bv-val} & |[text width=\axiswidth]|boolector-bv-clocks-bv-val-\{\cubicEnc, \quadEncN\} \\};
\end{tikzpicture}
\vspace{-1em}
\caption{Experimental results for the \texttt{fkp2013-unsat} benchmark using four SMT solvers and four SMT-LIB theory combinations. The graph shows the factorial growth of the number of SAT conflicts in both the cubic-size and quadratic-size partial-order encoding as $N$ increases.}

\label{fig:experiments}
\end{figure}
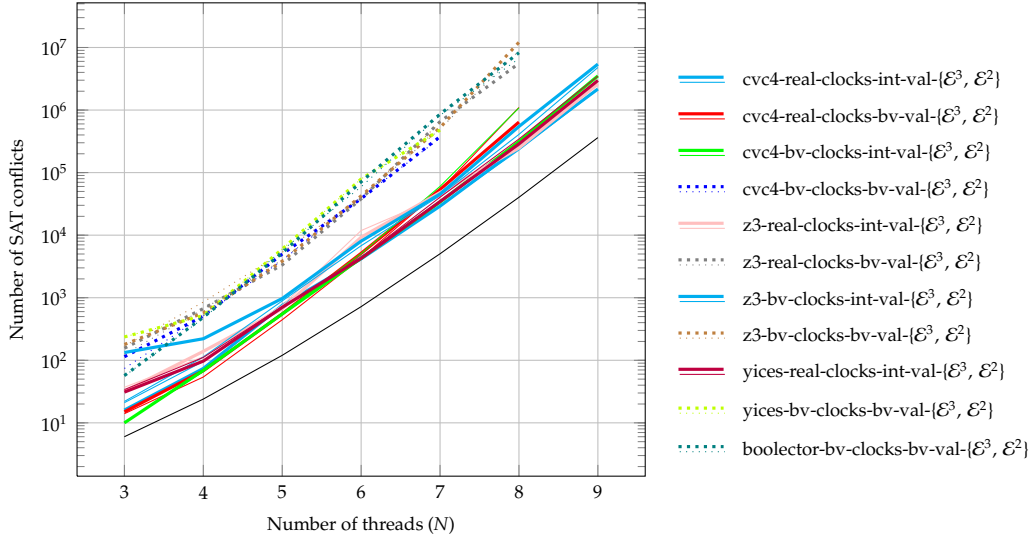

Figure~\ref{fig:experiments} charts the number of conflicts reported by each solver during execution.\footnote{
  Elapsed time and memory usage for the experiment is available in Appendix~\ref{sec:more:exp}.
}
Executions that exceeded the time limit of 1 hour are not included.
The $x$-axis corresponds to $N$.
The $y$-axis corresponds to the number of conflicts generated by the solver and has a logarithmic scale.
The legend for the chart groups together both the \cubicEnc (bold lines) and \quadEnc (thin lines) for a solver and theory specification.
These are further grouped into bit-blasted benchmarks (dotted lines) and \DPLLT (solid lines).
We also plot $N!$ as a black line.
The goal of the Figure~\ref{fig:experiments} is to convey the overall trends instead of compare individual data points.

% We next discuss our experimental results. The purpose of Figure~\ref{fig:experiments} is to summarize data trends (rather than individual data points). The $x$-axis and $y$-axis corresponds to the number of threads, $N$, and the number of SAT conflicts generated by the SMT solver, respectively. We group all plotted data points by \cubicEnc and \quadEnc where coloured bold (respectively, thin) lines, dots and dashes correspond to \quadEnc (respectively, \cubicEnc) benchmarks. Note that the $y$-axis is in logarithmic scale. This explains why the black curve for $N$ factorial is almost a straight line. Since all other data points lie above this black line, our experiments confirm that the number of SAT conflicts grows $\Omega(N!)$ for $3 \leq N \leq 9$.

We examine the number of SAT conflicts as it is a uniform and readily available statistic that is a lower bound on the number of proof steps taken by each solver.
Across all combinations, the number of conflicts observed is above the $N!$ line.
Thus  the $N!$ theory conflict lower bound proofs given in section~\ref{section:lowerbounds} holds for the \DPLLT combinations.
Our theoretical lower bounds do not extend to the bit-blasted combinations.
Nevertheless, our experiments show that the number of SAT conflicts are two orders of magnitude higher than $N!$ for bit-blasted combinations.
We therefore conjecture that a similar $N!$ lower bound exists for \Res proofs for the bit-blasted combinations.
We also examined CVC4's more detailed statistics on the \DPLLT combinations.
We confirmed that the number of \theoryValue-conflicts is always above $N!$ on the \DPLLT combinations.

\section{Conclusion}
\label{section:concl}

In this paper, we have demonstrated a theoretical factorial lower bound on the number of \ThLearn applications in all \DPLLT proofs for a challenge problem of historical interest using two state-of-the-art encodings.
Our encodings are most closely related to~\cite{AKT2013,HK2015}.
% where~\cite{AKT2013} shows that the cubic-size encoding can be used to find bugs in open-source software (including Apache HTTP server, PostgreSQL and the Linux kernel).
Experiments confirm the theoretical lower bound for \DPLLT proofs and show a strong relationship to the number of SAT conflicts in \Res-proofs for bitblasted bitvector encodings.
Both the theoretical relationships and the empirical relationships hold over a cubic \cubicEnc and a quadratic \quadEnc encoding.
Our experiments are therefore particularly significant for state-of-the-art tools such as CBMC (which implements a variant of \cubicEnc).
We believe that the kind of analysis we have undertaken throughout this paper provides an important diagnostic practice in the development of SMT encodings.
Future work will focus on handling the value constraints for partial-order encodings of weak memory concurrency and improving the performance of the SMT solvers on such benchmarks by moving outside of Fixed-Alphabet-\DPLLT proofs.

\paragraph{Acknowledgments}
Work partially supported by \href{http://erc.europa.eu/}{European Research Council} project~280053 (CPROVER)\href{http://www.cprover.org/}{\mbox{``CPROVER''}} and Grant Agreement nr.~306595 \href{http://stator.imag.fr/}{\mbox{``STATOR''}}.

\bibliographystyle{splncs03}
\bibliography{paper}

\appendix
\newpage

\section{Proofs for Lower Bounds}
\label{app:proof}
This section gives a more formal derivation for
$\VarAssign_\pi$ and $\assign_\pi$ the concepts discussed in
section~\ref{section:lowerbounds},
and proofs for Lemma~\ref{thm:phi:crit} and Theorem~\ref{thm:square:lb}.
We use $\Model, \VarAssign \models \phi$ to denote that a $\Sigma$-structure $\Model$ and a variable assignment over $\Model$ satisfies a $\Sigma$-formula $\phi$.

Let $\Model$ be any $(\theoryClock + \theorySelection + \theoryValue)$-structure
with the additional constraint that
\theoryClock, \theorySelection, and \theoryValue sorts are mapped to domains with cardinalities at least $|E|$, $N+1$, and $N+1$ respectively.
Such a structure exists unless \theoryValue is bit-vectors and the bit-width is insufficiently large.
We now construct a first-order variable assignment $\VarAssign_\pi$ over \theoryClock and \theorySelection variables to match $\sigma(\pi)$.
Let $x_1 \clockpo^\Model \cdots \clockpo^\Model x_{|E|}$ be any arbitrary chain in the \theoryClock domain of $\Model$, and let $\tuple{y_0, y_1, \ldots, y_N}$ be an arbitrary enumeration of $N+1$ distinct elements in the \theorySelection domain.
Both the $x_i$ chain and the $y_i$ sequence exist as the cardinalities are large enough.
We now assign the $\clock{e}$ and $\writeSel{e}$ variables.
\[
\VarAssign_\pi(\clock{e}) =
\begin{cases}
  x_1 & e = w_{init} \\
  x_{2i} & e = r_{\pi(i)} \\
  x_{2i+1} & e = w_{\pi(i)} \\
  x_{2N+2} & e = r_{assert}
\end{cases}
\qquad
\VarAssign_\pi(\writeSel{w})=
\begin{cases}
  y_0 & w = w_{init} \\
  y_{i} & w = w_{\pi(i)}
\end{cases}
\qquad
\VarAssign_\pi(\readSel{r})=
\begin{cases}
  y_{i} & r = r_{\pi(i+1)} \\
  y_{N} & r = r_{assert} \\
\end{cases}
\]

We construct a complete set of $\Th$-literals $\PreAbs_\pi$ (either $\Lit \in \PreAbs_\pi$ or $\lnot \Lit \in \PreAbs_\pi$ for all $\Lit \in \TLiterals$).
This will correspond to $\assign_\pi$ before abstraction.
%All of the $\Th$-atoms in $\phi^3$ are either \theoryClock, \theoryValue or \theorySelection. 
For any literal $\Lit$ over \theoryClock or \theorySelection atoms,
we evaluate $\Lit$ w.r.t. $\Model$ and $\VarAssign_\pi$ to assign it in $\PreAbs_\pi$,
i.e. $\Lit \in \PreAbs_\pi$ if $\Model,\VarAssign_\pi \models \Lit$.
For atoms over \theoryValue, we include the $\TConf_\pi$ literals in $\PreAbs_\pi$ (defined in section~\ref{section:lowerbounds}).
For all other \theoryValue equalities $\Lit$ in $\phi^3$, we include $\lnot \Lit \in \PreAbs_\pi$.
We now let $\assign_\pi = \prop{\PreAbs_\pi}$.

%% \begin{lemma*}[]
%%   Each $\assign_\pi$ is a critical assignment for $\phi^3$ with the conflict $\PConf{\pi}$.
%% \end{lemma*}
\begin{proof}[\textbf{Proof of Lemma~\ref{thm:phi:crit}}]
  Since $\TConf_\pi \subseteq \PreAbs_\pi$ and $\assign_\pi = \prop{\PreAbs_\pi}$, $\PConf{\pi} \subseteq \assign_\pi$.
  We now show that for each $\Lit \in \TConf_\pi$, we can extend $\VarAssign_\pi$
  to a new assignment $\VarAssign_\pi^\Lit$
  so that $\Model,\VarAssign_\pi^\Lit \models h$ for all $h \in \PreAbs_\pi \setminus \set{\Lit}$.
  For brevity, we denote by $\Lit_0 = \parens{\readval{r_{\pi(1)}} = 0}$,
  $\Lit_i = \parens{\readval{r_{\pi(i)}} + 1 = \readval{r_{\pi(i+1)}}}$ for $i \in 1 \ldots N-1$,
  $\Lit_{assert1} = \parens{\readval{r_{\pi(N)}} + 1 = \readval{r_{assert}}}$, and
  $\Lit_{assert2} = \parens{\readval{r_{assert}} > N}$.
  \begin{align*}
    \VarAssign_\pi^{\Lit_0}(\readval{r})
    &= \begin{cases}
      1 & r = r_{\pi(1)} \\
      j+1 & r = r_{\pi(j)} \\
      N+1 & r = r_{assert}
    \end{cases}
    &
    \VarAssign_\pi^{\Lit_{i}}(\readval{r})
    &= \begin{cases}
      0 & r = r_{\pi(1)} \\
      j & r = r_{\pi(j)}, j < i \\
      k+1 & r = r_{\pi(k)}, k \geq i \\
      N+1 & r = r_{assert}
    \end{cases}
    \\
    \VarAssign_\pi^{\Lit_{assert1}}(\readval{r})
    &= \begin{cases}
      0 & r = r_{\pi(1)} \\
      j & r = r_{\pi(j)} \\
      N+1 & r = r_{assert}
    \end{cases}
    &
    \VarAssign_\pi^{\Lit_{assert2}}(\readval{r})
    &= \begin{cases}
      0 & r = r_{\pi(1)} \\
      j & r = r_{\pi(j)} \\
      N & r = r_{assert}
    \end{cases}
  \end{align*}
  We omit ${\_}^\Model$ from the \theoryValue-constants $0,\ldots, N+1$ above.
  It is now that case that $\Model,\VarAssign_\pi^\Lit \models h$ for all $h \in \PreAbs_\pi \setminus \set{\Lit}$.
  Thus $\PreAbs_\pi \setminus \set{\Lit}$ is satisfiable modulo $\Th$.
  As every subset of $\PreAbs_\pi$ excluding exactly one literal in $\TConf_\pi$ is satisfiable modulo $\Th$, $\TConf_\pi$ is the unique minimal $\Th$-conflict in $\PreAbs_\pi$.
  Thus $\assign_\pi$ is a critical assignment.
\end{proof}

\begin{proof}[\textbf{Proof of Theorem~\ref{thm:square:lb}}]
  We extend $\VarAssign_\pi$ to assign $\supremum{r}$ to match $\sigma(\pi)$:
  $\VarAssign_\pi(\supremum{r_{\pi(1)}}) = \VarAssign_\pi(\clock{w_{init}})$,
  $\VarAssign_\pi(\supremum{r_{\pi(i)}}) = \VarAssign_\pi(\clock{w_{\pi(i-1)}})$, and
  $\VarAssign_\pi(\supremum{r_{assert}}) = \VarAssign_\pi(\clock{w_{\pi(N)}})$.
  We follow the same construction of $\PreAbs_\pi$, $\assign_\pi$, $\VarAssign_\pi^\Lit$, and $\Crits$ as before for $\phi^3$.
  $Q$ is a set of non-interfering critical assignments for $\phi^2$. 
\end{proof}

\newpage
\section{Time and Memory Usage}
\label{sec:more:exp}
Elapsed time and memory usage for \texttt{fkp2013-unsat} benchmark; TIMEOUT $=$ 1 hour.

\begin{center}
\scalebox{0.57}{%
\begin{tabular}{c||c||c}
\textbf{CVC4} & \textbf{Z3} & \textbf{Yices} and \textbf{Boolector} \\ \midrule
\begin{tabular}[t]{@{}r|r|r@{}}
$N$ & Time (s) & Memory (MB) \\ \midrule
&\multicolumn{2}{|c}{\textit{cvc4-real-clocks-int-val-\cubicEnc}} \\ \midrule
3 & 0.00 & 0.0 \\
4 & 0.00 & 0.0 \\
5 & 0.28 & 15.3 \\
6 & 1.74 & 18.0 \\
7 & 15.50 & 24.5 \\
8 & 200.28 & 81.0 \\
9 & 2718.45 & 557.3 \\ \midrule
&\multicolumn{2}{|c}{\textit{cvc4-real-clocks-int-val-\quadEnc}} \\ \midrule
3 & 0.00 & 0.0 \\
4 & 0.00 & 0.0 \\
5 & 0.30 & 14.4 \\
6 & 2.19 & 16.9 \\
7 & 18.79 & 22.4 \\
8 & 199.17 & 68.0 \\
9 & 2906.83 & 579.2 \\ \midrule
&\multicolumn{2}{|c}{\textit{cvc4-real-clocks-bv-val-\cubicEnc}} \\ \midrule
3 & 0.00 & 0.0 \\
4 & 0.00 & 0.0 \\
5 & 0.39 & 17.4 \\
6 & 3.39 & 21.3 \\
7 & 36.00 & 32.1 \\
8 & 512.64 & 147.1 \\
9 & TIMEOUT & 597.3 \\ \midrule
&\multicolumn{2}{|c}{\textit{cvc4-real-clocks-bv-val-\quadEnc}} \\ \midrule
3 & 0.00 & 0.0 \\
4 & 0.09 & 15.5 \\
5 & 0.59 & 16.6 \\
6 & 5.20 & 19.8 \\
7 & 56.09 & 32.2 \\
8 & 1277.46 & 274.5 \\
9 & TIMEOUT & 554.1 \\ \midrule
&\multicolumn{2}{|c}{\textit{cvc4-bv-clocks-int-val-\cubicEnc}} \\ \midrule
3 & 0.00 & 0.0 \\
4 & 0.00 & 0.0 \\
5 & 0.20 & 15.0 \\
6 & 1.40 & 17.8 \\
7 & 13.09 & 26.5 \\
8 & 141.88 & 80.7 \\
9 & 1811.85 & 642.8 \\ \midrule
&\multicolumn{2}{|c}{\textit{cvc4-bv-clocks-int-val-\quadEnc}} \\ \midrule
3 & 0.00 & 0.0 \\
4 & 0.00 & 0.0 \\
5 & 0.19 & 14.5 \\
6 & 1.99 & 17.6 \\
7 & 23.89 & 39.6 \\
8 & 600.84 & 359.4 \\
9 & TIMEOUT & 978.5 \\ \midrule
&\multicolumn{2}{|c}{\textit{cvc4-bv-clocks-bv-val-\cubicEnc}} \\ \midrule
3 & 0.00 & 0.0 \\
4 & 0.00 & 0.0 \\
5 & 0.20 & 22.4 \\
6 & 2.69 & 40.9 \\
7 & 116.99 & 280.2 \\
8 & TIMEOUT & 1911.2 \\
9 & TIMEOUT & 2022.0 \\ \midrule
&\multicolumn{2}{|c}{\textit{cvc4-bv-clocks-bv-val-\quadEnc}} \\ \midrule
3 & 0.03 & 15.2 \\
4 & 0.00 & 0.0 \\
5 & 0.19 & 20.1 \\
6 & 2.98 & 40.5 \\
7 & 188.26 & 307.8 \\
8 & TIMEOUT & 1801.0 \\
9 & TIMEOUT & 1785.3 \\
\end{tabular}
&
\begin{tabular}[t]{@{}r|r|r@{}}
$N$ & Time (s) & Memory (MB) \\ \midrule
&\multicolumn{2}{|c}{\textit{z3-real-clocks-int-val-\cubicEnc}} \\ \midrule
3 & 0.00 & 0.0 \\
4 & 0.00 & 0.0 \\
5 & 0.00 & 0.0 \\
6 & 0.99 & 17.1 \\
7 & 4.80 & 21.4 \\
8 & 37.89 & 28.7 \\
9 & 697.24 & 45.8 \\ \midrule
&\multicolumn{2}{|c}{\textit{z3-real-clocks-int-val-\quadEnc}} \\ \midrule
3 & 0.00 & 0.0 \\
4 & 0.00 & 0.0 \\
5 & 0.00 & 0.0 \\
6 & 1.39 & 18.3 \\
7 & 5.50 & 21.2 \\
8 & 50.99 & 28.5 \\
9 & 694.27 & 42.4 \\ \midrule
&\multicolumn{2}{|c}{\textit{z3-real-clocks-bv-val-\cubicEnc}} \\ \midrule
3 & 0.00 & 0.0 \\
4 & 0.09 & 15.7 \\
5 & 0.49 & 16.4 \\
6 & 7.19 & 19.6 \\
7 & 112.19 & 27.9 \\
8 & 1415.20 & 49.8 \\
9 & TIMEOUT & 68.5 \\ \midrule
&\multicolumn{2}{|c}{\textit{z3-real-clocks-bv-val-\quadEnc}} \\ \midrule
3 & 0.00 & 0.0 \\
4 & 0.10 & 15.7 \\
5 & 0.79 & 16.9 \\
6 & 8.29 & 20.3 \\
7 & 97.79 & 25.9 \\
8 & 2441.23 & 56.9 \\
9 & TIMEOUT & 62.6 \\ \midrule
&\multicolumn{2}{|c}{\textit{z3-bv-clocks-int-val-\cubicEnc}} \\ \midrule
3 & 0.00 & 0.0 \\
4 & 0.00 & 0.0 \\
5 & 0.18 & 16.9 \\
6 & 1.69 & 18.6 \\
7 & 13.98 & 22.9 \\
8 & 270.86 & 31.7 \\
9 & 1755.95 & 57.1 \\ \midrule
&\multicolumn{2}{|c}{\textit{z3-bv-clocks-int-val-\quadEnc}} \\ \midrule
3 & 0.00 & 0.0 \\
4 & 0.00 & 0.0 \\
5 & 0.29 & 16.5 \\
6 & 2.69 & 18.0 \\
7 & 26.67 & 21.2 \\
8 & 394.60 & 32.4 \\
9 & 2862.76 & 54.1 \\ \midrule
&\multicolumn{2}{|c}{\textit{z3-bv-clocks-bv-val-\cubicEnc}} \\ \midrule
3 & 0.00 & 0.0 \\
4 & 0.00 & 0.0 \\
5 & 0.10 & 14.8 \\
6 & 3.10 & 25.1 \\
7 & 70.59 & 40.8 \\
8 & 3521.01 & 145.8 \\
9 & TIMEOUT & 133.6 \\ \midrule
&\multicolumn{2}{|c}{\textit{z3-bv-clocks-bv-val-\quadEnc}} \\ \midrule
3 & 0.10 & 12.6 \\
4 & 0.10 & 13.1 \\
5 & 0.79 & 14.7 \\
6 & 10.59 & 26.1 \\
7 & 252.13 & 40.5 \\
8 & TIMEOUT & 101.8 \\
9 & TIMEOUT & 114.4 \\
\end{tabular}
&
\begin{tabular}[t]{@{}r|r|r@{}}
$N$ & Time (s) & Memory (MB) \\ \midrule
&\multicolumn{2}{|c}{\textit{yices-real-clocks-int-val-\cubicEnc}} \\ \midrule
3 & 0.00 & 0.0 \\
4 & 0.00 & 0.0 \\
5 & 0.00 & 0.0 \\
6 & 0.10 & 3.4 \\
7 & 1.30 & 3.6 \\
8 & 26.59 & 7.1 \\
9 & 1086.21 & 34.0 \\ \midrule
&\multicolumn{2}{|c}{\textit{yices-real-clocks-int-val-\quadEnc}} \\ \midrule
3 & 0.00 & 0.0 \\
4 & 0.00 & 0.0 \\
5 & 0.00 & 0.0 \\
6 & 0.10 & 3.0 \\
7 & 1.90 & 4.0 \\
8 & 38.49 & 8.3 \\
9 & 1416.26 & 40.9 \\ \midrule
&\multicolumn{2}{|c}{\textit{yices-bv-clocks-bv-val-\cubicEnc}} \\ \midrule
3 & 0.00 & 0.0 \\
4 & 0.00 & 0.0 \\
5 & 0.10 & 4.0 \\
6 & 3.89 & 5.5 \\
7 & 74.48 & 12.2 \\
8 & TIMEOUT & 70.9 \\
9 & TIMEOUT & 96.4 \\ \midrule
&\multicolumn{2}{|c}{\textit{yices-bv-clocks-bv-val-\quadEnc}} \\ \midrule
3 & 0.00 & 0.0 \\
4 & 0.00 & 0.0 \\
5 & 0.10 & 4.0 \\
6 & 3.90 & 5.6 \\
7 & 76.59 & 12.0 \\
8 & TIMEOUT & 74.3 \\
9 & TIMEOUT & 96.0 \\ \midrule
&\multicolumn{2}{|c}{\textit{boolector-bv-clocks-bv-val-\cubicEnc}} \\ \midrule
3 & 0.00 & 0.0 \\
4 & 0.10 & 4.3 \\
5 & 0.69 & 5.6 \\
6 & 5.39 & 10.0 \\
7 & 94.29 & 34.2 \\
8 & 1491.46 & 95.2 \\
9 & TIMEOUT & 140.7 \\ \midrule
&\multicolumn{2}{|c}{\textit{boolector-bv-clocks-bv-val-\quadEnc}} \\ \midrule
3 & 0.00 & 0.0 \\
4 & 0.09 & 4.6 \\
5 & 0.69 & 6.1 \\
6 & 4.30 & 9.6 \\
7 & 86.49 & 29.1 \\
8 & 1122.07 & 87.4 \\
9 & TIMEOUT & 133.8 \\
\end{tabular}
\end{tabular}\vspace*{0.2cm}
%\label{table:time-and-memory}
}
\end{center}

\end{document}